\newcommand{\E}[1]{\mathbb{E}\left(#1\right)}
\newcommand{\EC}[2]{\mathbb{E}\left(#1\big|#2\right)}
\newcommand{\V}[1]{\mathbb{V}\left(#1\right)}
\newcommand{\diff}{\,\mathrm{d}}
\newcommand{\prob}{\mathbb{P}}
\newcommand{\ind}{\mathbf{1}}
\newlength{\fixboxwidth}
\theoremstyle{plain}  
\newtheorem{thm}{Theorem}[section] 
\newtheorem{lem}[thm]{Lemma} 
\newtheorem{prop}[thm]{Proposition} 
\newtheorem*{cor}{Corollary} 
\theoremstyle{definition} 
\newtheorem{defn}{Definition}[section] 
\newtheorem{exmp}{Example}[section]
\theoremstyle{remark}
\title{Generic Conditions for Forecast Dominance\thanks{We thank Werner Ehm, Tilmann Gneiting, Malte Kn\"uppel, Sebastian Lerch, Melanie Schienle and seminar and conference participants at the Heidelberg Institute for Theoretical Studies, Statistische Woche (Linz, 2018), the University of Cologne and QFFE (Marseille, 2019) for helpful discussions, and Yoon-Jae Whang for making program code related to \cite{LintonEtAl2005} available via his website. Johanna F.~Ziegel gratefully acknowledges financial support from the Swiss National Science Foundation.}}
\author{Fabian Kr\"uger\thanks{Karlsruhe Institute of Technology, Department of Economics, Bl\"ucherstra{\ss}e 17, 76185 Karlsruhe, Germany. Email: \texttt{fabian.krueger@kit.edu}}\and Johanna F.~Ziegel\thanks{University of Bern, Institute of Mathematical Statistics and Actuarial Science, Alpeneggstrasse 22, 3012 Bern, Switzerland. Email: \texttt{johanna.ziegel@stat.unibe.ch}.}}
\date{December 14, 2019}
\begin{document}

\doublespacing
\maketitle

\begin{abstract}
Recent studies have analyzed whether one forecast method dominates another under a class of consistent scoring functions. While the existing literature focuses on empirical tests of forecast dominance, little is known about the theoretical conditions under which one forecast dominates another. To address this question, we  derive a new characterization of dominance among forecasts of the mean functional. We present various scenarios under which dominance occurs. Unlike existing results, our results allow for the case that the forecasts' underlying information sets are not nested, and allow for uncalibrated forecasts that suffer, e.g., from model misspecification or parameter estimation error. We illustrate the empirical relevance of our results via data examples from finance and economics.\\
\textbf{Key words}: loss function, model comparison, prediction
\end{abstract}

\section{Introduction}

Forecasts of a random variable $Y$ (such as the inflation rate, a financial volatility measure, or the sale price of a house) play an important role in economics. Recent technological advances have contributed to an ever increasing array of data sources and forecasting techniques, which necessitates statistically principled comparisons of forecast quality. Here we focus on the typical task of predicting the mean of $Y$. It is well known that squared error loss sets the incentive to  correctly forecast the mean, conditional on a certain information set. This basic insight underlies the use of squared error for estimating regression models. However, \cite{Savage1971} shows that there are infinitely many other scoring (or loss) functions that are also consistent with the goal of forecasting the mean. Consider, for example, the task of modeling and forecasting the mean of a binary variable $Y \in \{0, 1\}$, which is simply the probability that $Y = 1$. In this case, squared error is often referred to as the `Brier score' \citep[following][]{Brier1950}. While squared error can be used to construct consistent parameter estimators in regression models and to evaluate probability forecasts out-of-sample, there is a continuum of other scoring functions that can be used as well \citep[see e.g.][]{BujaEtAl2005}. The Bernoulli log likelihood function, which corresponds to maximum likelihood estimation, is arguably the most popular of these choices. In the general case where $Y$ is not restricted to be binary, squared error continues to be a popular scoring function, and can be motivated as the (negative) log likelihood function of a Gaussian density with known variance. Log likelihood functions corresponding to other single-parameter families (such as Poisson or Exponential) can be employed as well; Table \ref{tab:examples} below provides examples.

The non-uniqueness of consistent scoring functions is challenging, in that rankings of two forecast methods by average scores may depend on the specific function used for out-of-sample evaluation. \cite{Ehm2016}, \cite{EhmKruger2017}, \cite{YenYen2018}, \cite{ZiegelEtAl2017} and \cite{BarendsePatton2019} therefore propose graphical tools and hypothesis tests to analyze the robustness of empirical forecast rankings. In their terminology, one forecast method dominates another if it performs better in terms of every consistent scoring function.

Adopting a theoretical perspective, \cite{Holzmann2014} show that a correctly specified forecast method dominates a competitor that is based on a smaller (nested) information set. However, forecasts based on diverse and thus non-nested information sets play a major role in applications, and are often encouraged by designers of forecast surveys and contests. For example, the European Central Bank's `Survey of Professional Forecasters' features private and public-sector, financial and non-financial institutions from all over Europe \citep{ECB2018}. \cite{Patton2017} demonstrates that non-nested information sets may lead to lack of forecast dominance, i.e., to forecast rankings that fail to be robust across consistent scoring functions. This issue has been tackled for probability forecasts of a binary variable \citep{DegrootFienberg1983,KrzysztofowiczLong1990}, but results for more general situations are available only under specific assumptions. Furthermore, all existing theoretical results assume that the forecasts under comparison specify the correct expectation of $Y$, given some information set. As illustrated by \cite{Patton2017}, this assumption is often violated in applications, which may lead to non-robust forecast rankings.

The present paper sheds new light on the theoretical conditions under which forecast dominance occurs. An understanding of these conditions is useful to interpret empirical results of (non-)robust forecast rankings, and to identify desiderata of forecasting methods that may inspire improvements of existing methods. Unlike previous studies, we derive conditions that allow for non-nested information sets. Furthermore, we allow for various types of forecast imperfections resulting,  amongst others, from model misspecification and parameter estimation error (if forecasts are generated by statistical methods) or cognitive biases (if forecasts are judgmental, generated by humans). These phenomena are ubiquitous in practice but have not been tackled by the existing theoretical literature on forecast dominance.

The paper is structured as follows. Section \ref{sec:mean} presents our main technical result, a new characterization of dominance among mean forecasts. We then discuss alternative sets of assumptions that yield natural conditions for dominance. Section \ref{sec:auto} considers the case of auto-calibrated forecasts, which means that the forecast matches the conditional expectation of $Y$, given the forecast itself. Under this condition, which allows for non-nested information sets, the forecast which is more variable in the sense of convex order \citep[see e.g.][]{ShakedShanthiku2007,Levy2016} dominates the other. This result generalizes the result of \cite{Holzmann2014} mentioned above, and thus provides weaker sufficient conditions for forecast dominance. Section \ref{sec:gauss} drops the auto-calibration assumption, but instead requires joint normality of each forecast with the predictand. Alternatively, Section \ref{sec:model} assumes that both forecasts are based on the same information set $\mathcal{F}$, but yield imperfect approximations of the conditional expectation of the predictand given $\mathcal{F}$. Our results in Sections \ref{sec:gauss} and \ref{sec:model} demonstrate that there can well be dominance relations among two uncalibrated (i.e., not auto-calibrated) forecasts. In Section \ref{sec:data}, we illustrate our theoretical results via data examples from finance and economics. Section \ref{sec:disc} concludes with a discussion of the results and open problems. All proofs are deferred to the appendix. An online appendix presents additional analytical examples and details on hypothesis testing in our data examples.

\section{A Characterization of Forecast Dominance}
\label{sec:mean}

\cite{Savage1971} considers scoring functions of the form
\begin{equation}
S(x, y) = \phi(y) - \phi(x) - \phi'(x)~(y-x),\label{csf}
\end{equation}
where $x \in \mathbb{R}$ is a forecast, $y\in\mathbb{R}$ is a realization, and $\phi$ is a convex function with subgradient $\phi'$. Here, a scoring function assigns a negatively oriented penalty, such that a smaller value of $S$ corresponds to a better forecast. Functions of the form given in \eqref{csf} are \emph{consistent} for the mean \citep{Gneiting2011}: If $Y$ has cumulative distribution function (CDF) $F$, then
\begin{equation}
\E{S(m(F), Y)} \le \E{S(x, Y)}, \quad \text{for any $x \in \mathbb{R}$.}\label{consistent}
\end{equation}
Here $m(F)=\int x\diff F(x)$ is the mean of $F$ (which we always assume to exist and be finite), and $\mathbb{E}$ denotes expectation. Equation (\ref{consistent}) states that a forecaster minimizes their expected score when stating the mean of $Y$ as their forecast. The scoring function $S$ is \emph{strictly consistent} if equality in (\ref{consistent}) implies $x = m(F)$. Strict consistency corresponds to a strictly convex function $\phi$ in (\ref{csf}). Under some additional assumptions \citep[see][Theorem 7]{Gneiting2011}, the scoring functions given at \eqref{csf} are the only consistent scoring functions for the mean. Note that the additive term $\phi(y)$ in (\ref{csf}) is included to enforce the convention that $S(y,y) = 0$. However, the term does not depend on $x$, and is hence irrelevant in terms of optimal forecasting. Table \ref{tab:examples}, which is a modified version of \citet[Table 1]{YenYen2018}, presents examples of strictly consistent scoring functions for the mean.

\begin{table}[!htbp]
	\centering
	\footnotesize
	\begin{tabular}{cclll}
		&& Range & Range & \\
		$S(x,y)$ & $\phi(z)$ &  of $X$ &  of $Y$  & Comment(s) \\ \toprule
		$(y-x)^2$ & $z^2$ & $\mathbb{R}$ & $\mathbb{R}$ &  squared error \\ [.4cm]
		\scriptsize{$-y \log x - (1-y)\log (1-x)^{*}$} & \scriptsize{$z \log z + (1-z)\log(1-z)$}& $(0,1)$ & $[0,1]$ & negative log likelihood of \\ &&&& Bernoulli dist.\\[.4cm]
		$\log x + \frac{y}{x} - 1^{*}$& $-\log z$ & $(0,\infty)$ & $[0, \infty)$ & negative log likelihood of\\ &&&& exponential dist.; equal to \\
		&&&& QLIKE loss \citep{Patton2011} \\ [.4cm]
		$-y \log {x} + x^{*}$ & $z\log z - z$ & $(0,\infty)$ & $[0, \infty)$ &   negative log likelihood of\\ &&&& Poisson dist.\\ \bottomrule
	\end{tabular}
	\caption{\small Examples of strictly consistent scoring functions for the mean. Each example is characterized by a strictly convex function $\phi(z)$. Scoring functions marked by an asterisk  ($^*$) differ from Equation (\ref{csf}) by subtracting $\phi(y)$. This transformation ensures that the scoring function is well-defined over the entire range of $Y$. Rankings of any two forecasts $x_1, x_2$ remain unchanged, and strict consistency of the scoring function is preserved.\label{tab:examples}}
\end{table}

Consider two generic forecasters (or forecasting methods) A and B who issue forecasts $X_A$ and $X_B$ of the mean of $Y$. We treat these forecasts as random variables and consider their joint distribution with $Y$, the random variable to be predicted. We assume throughout that $X_A$, $X_B$ and $Y$ are integrable. The random variables are defined on the probability space $(\Omega, \mathcal{A}, \mathbb{Q})$ whereby the point forecasts $X_A, X_B$ are measurable with respect to information sets $\mathcal{A}_A, \mathcal{A}_B \subseteq \mathcal{A}$; see \citet[Section 3.1]{Ehm2016} for a detailed discussion. This setup includes the case of a binary predictand $Y \in \{0, 1\}$, in which the mean forecasts $X_A, X_B$ quote the probability that $Y = 1$, conditional on their respective information sets. We emphasize that the setup is consistent with the case that $Y \equiv Y_t$ is a time series and $X_j \equiv X_{tj}, j \in \{A, B\}$ are associated forecasts. The only requirement is that the joint distribution of the forecasts and the predictand is strictly stationary, such that the objects that we use in the following (notably expectations and CDFs) are well defined and do not depend on time. See \citet[Definition 2.2]{StraehlZiegel2017} for a formal probability space setup involving time series of forecasts and realizations, and Example \ref{exmp:3} for an illustration. The following notion of forecast dominance is central to this paper.

\begin{defn}[Forecast dominance]\label{def:dom}
Forecast $A$ \emph{dominates} forecast $B$ if $$\E{S(X_A, Y)} \le \E{S(X_B, Y)}$$
for every function $S$ of the form given in (\ref{csf}).
\end{defn}

The preceding definition implies that the better performance of $A$ compared to $B$ is robust across all consistent scoring functions $S$. {Theorem 1b and Corollary 1b of \citet{Ehm2016} imply that forecast dominance holds if and only if $\E{S_\theta(X_A, Y)} \le \E{S_\theta(X_B, Y)}$ for all $\theta \in \mathbb{R}$,
where 
\begin{align}
S_\theta(x,y) &= \frac{1}{2}(\theta-y)\mathbf{1}_{(x>\theta)}\label{esmean}
\end{align}
is the so-called elementary score for the mean indexed by the parameter $\theta \in \mathbb{R}$, up to a term that does not depend on $x$ and is thus irrelevant in terms of forecast rankings (see Lemma \ref{lem:A.3} for details). Building upon the elementary score}, we next present a novel characterization of forecast dominance. 

\begin{thm}\label{thm:mean}
	Let $A$ and $B$ be forecasts for the mean. Then $A$ dominates $B$ if and only if $\psi_{A}(\theta) \ge \psi_B(\theta)$ for all $\theta \in \mathbb{R}$,
	where 
	\begin{align*}
	\psi_j(\theta) &=\frac{1}{2}\int_{\theta}^\infty \prob(X_j > w)\diff w + \frac{1}{2} \E{(\EC{Y}{X_j}-X_j)\mathbf{1}_{(X_j > \theta)}} \quad \text{for $j \in \{A,B\}$.}
	\end{align*}
\end{thm}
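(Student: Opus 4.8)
The plan is to reduce the statement to the elementary-score characterization of dominance recalled just before the theorem, and then to evaluate the expected elementary score in closed form. By Theorem 1b and Corollary 1b of \citet{Ehm2016}, together with Lemma~\ref{lem:A.3}, forecast $A$ dominates forecast $B$ if and only if $\E{S_\theta(X_A,Y)} \le \E{S_\theta(X_B,Y)}$ for every $\theta \in \mathbb{R}$, where $S_\theta$ is the elementary score in \eqref{esmean}; the additive term that $S_\theta$ omits relative to a full Savage representation \eqref{csf} does not depend on the forecast and hence drops out of the comparison between $A$ and $B$. It therefore suffices to prove the identity $\E{S_\theta(X_j,Y)} = -\psi_j(\theta)$ for $j \in \{A,B\}$ and all $\theta \in \mathbb{R}$, since then the preceding inequality is equivalent to $\psi_A(\theta) \ge \psi_B(\theta)$ for all $\theta$, which is exactly the asserted condition.

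To establish the identity, fix $j$ and $\theta$ and write $\E{S_\theta(X_j,Y)} = \tfrac12 \E{(\theta - Y)\mathbf{1}_{(X_j > \theta)}}$. Since $\mathbf{1}_{(X_j>\theta)}$ is $\sigma(X_j)$-measurable and $Y$ is integrable, the tower property allows me to replace $Y$ by $\EC{Y}{X_j}$ inside the expectation. Adding and subtracting $X_j$ then splits the expression into $\tfrac12 \E{(\theta - X_j)\mathbf{1}_{(X_j>\theta)}}$ and $-\tfrac12\E{(\EC{Y}{X_j} - X_j)\mathbf{1}_{(X_j>\theta)}}$, the second of which already matches the second summand of $\psi_j(\theta)$ with the correct sign. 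For the first term, note that $(\theta - X_j)\mathbf{1}_{(X_j>\theta)} = -(X_j - \theta)^+$, so it equals $-\tfrac12 \E{(X_j-\theta)^+}$; applying the layer-cake representation $\E{W} = \int_0^\infty \prob(W>t)\diff t$ to the nonnegative integrable variable $W = (X_j-\theta)^+$ and substituting $w = \theta + t$ gives $\E{(X_j-\theta)^+} = \int_\theta^\infty \prob(X_j>w)\diff w$. Combining the two pieces yields $\E{S_\theta(X_j,Y)} = -\psi_j(\theta)$, and the theorem follows.

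I do not expect a genuine obstacle here: the argument is a direct computation once the elementary-score characterization is in hand. The only points requiring care are the integrability bookkeeping—using that $X_A, X_B, Y$ are integrable so that $\EC{Y}{X_j}$ is well defined, the integral $\int_\theta^\infty \prob(X_j>w)\diff w$ is finite, and all expectations and the Fubini/layer-cake step are justified—and verifying that the forecast-independent additive term discarded in passing from \eqref{csf} to $S_\theta$ is correctly accounted for, which is precisely the content of Lemma~\ref{lem:A.3}.
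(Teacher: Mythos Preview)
Your argument is correct, and it takes a more direct route than the paper. The paper obtains Theorem~\ref{thm:mean} as the $\tau=1/2$ special case of Theorem~\ref{thm:expectiles}, whose proof requires the auxiliary Fubini-type identities of Lemmas~\ref{lem:A.1} and~\ref{lem:A.2} to evaluate $\E{S_\theta(X_j,Y)}$ in the general asymmetric expectile setting; only after that machinery is in place does the paper collapse the two joint-probability terms via $\prob(X_j>w,Y\ge w)+\prob(X_j>w,Y<w)=\prob(X_j>w)$ and invoke the tower property on $\E{(Y-X_j)\mathbf{1}_{(X_j>\theta)}}$. You instead exploit from the outset that for $\tau=1/2$ the weight $|\mathbf{1}_{(y<\theta)}-\tau|$ is the constant $1/2$, so the tower property applies immediately to replace $Y$ by $\EC{Y}{X_j}$, and then the layer-cake formula for $(X_j-\theta)^+$ finishes the computation. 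Your approach is shorter and avoids Lemmas~\ref{lem:A.1}--\ref{lem:A.2} entirely; the price is that it does not extend to expectiles with $\tau\neq 1/2$, where the weight depends on $Y$ and the initial conditioning step no longer goes through cleanly---which is precisely why the paper develops the heavier machinery.
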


{The function $\psi_j(\theta)$ appearing in Theorem \ref{thm:mean} is the expected value of the random variable $-S_\theta(X_j,Y)$, where $S_\theta(x,y)$ has been defined at (\ref{esmean}). Theorem \ref{thm:expectiles} in Appendix \ref{app:A} is a more general version of Theorem \ref{thm:mean}, covering forecast dominance for expectiles at level $\tau \in (0,1)$. Expectiles are an asymmetric generalization of the mean which is the expectile at level $\tau = 1/2$ \citep{NeweyPowell1987}. While the representation of forecast dominance in \cite{Ehm2016} is an important prerequisite for our Theorems \ref{thm:expectiles} and \ref{thm:mean}, our derivation of an analytical expression for the {expected} score is novel, and is crucial in order to establish forecast dominance (or lack thereof) in theoretical scenarios. 
The two summands of the function $\psi_j$ separate the influence of the variability (first summand) and the calibration (second summand) of the forecast. Roughly speaking, calibration refers to the statistical compatibility of forecasts and observations; see Section \ref{sec:auto} for details. Variability of a forecast may or may not be desirable depending on the calibration properties; see Theorem \ref{thm:1} and Proposition \ref{prop:bivnormal}.} In the remainder of this paper, we derive various interpretable scenarios under which the technical condition of Theorem \ref{thm:mean} is satisfied. 

\section{Auto-Calibrated Forecasts}\label{sec:auto}

\begin{defn}[Auto-calibration]\label{def:auto}
$X$ is an \emph{auto-calibrated} forecast of $Y$ if $\EC{Y}{X} = X$ almost surely.
\end{defn}

The definition implies that the forecast $X$ of $Y$ can be used `as is', without any need to perform bias correction. The prefix `auto' indicates that $X$ is an optimal forecast relative to the information set $\sigma(X)$ generated by $X$ itself. \citet[Proposition 2]{Patton2017} also considered this notion of auto-calibration in the context of forecast dominance. In the literature on forecasting binary probabilities, which are mean forecasts and thus nested in the current setting, the same notion is often simply called `calibration', see e.g. \citet[Section 2.1]{GneitingRanjan2010}. Furthermore, 
the definition coincides with the null hypothesis of the popular \citet[henceforth MZ]{Mincer1969} regression, given by 
\begin{equation}
Y = \alpha + \beta X + \text{error};\label{mzreg}
\end{equation}
the null hypothesis $(\alpha, \beta) = (0, 1)$ corresponds to $X$ being an auto-calibrated forecast of $Y$.

Auto-calibration relates to the joint distribution of the forecast $X_j$ and the realization $Y$. Below we make use of the concept of convex order that compares univariate distributions. 

\begin{defn}[Convex order]\label{def:conv-order}
A random variable $Z_1$ is \emph{greater} than $Z_2$ in \emph{convex order} if $\E{\phi(Z_1)} \ge \E{\phi(Z_2)}$, for all convex functions $\phi$ such that the expectations exist.
\end{defn}

By \citeauthor{Strassen1965}'s (\citeyear{Strassen1965}) theorem, $Z_1$ is greater than $Z_2$ in convex order if and only if there are random variables $Z_1'$, $Z_2'$ on a joint probability space such that $Z_1' \sim Z_1$, $Z_2' \sim Z_2$ and $\EC{Z_1'}{Z_2'} = Z_2'$. Here, $\sim$ denotes equality in distribution. 
If $Z_1$ is greater than $Z_2$ in convex order then $\V{Z_1} \ge \V{Z_2}$, where $\mathbb{V}$ denotes variance. The converse is generally false; however, in the special case that $Z_1$ and $Z_2$ are both Gaussian with the same mean, $\V{Z_1} > \V{Z_2}$ implies that $Z_1$ is greater in convex order than $Z_2$.

If $Z_1$ is greater than $Z_2$ in convex order, then $-Z_2$ second-order stochastically dominates $-Z_1$. (A random variable $V$ second-order stochastically dominates another random variable $W$ if $\E{u(V)} \ge \E{u(W)}$ for all non-decreasing and concave functions $u$; see \citet[][Section 3.6]{Levy2016}. Note that this definition is weaker than convex order since the latter involves both increasing and decreasing functions $\phi$.\label{fn:ssd}) Furthermore, writing $Z_1' = Z_2' + \varepsilon$ with $\varepsilon = Z_1' - Z_2'$, we obtain $\EC{\varepsilon}{Z_2'} = 0$. In the economic literature, $Z_1$ is sometimes referred to as being equal in distribution to '$Z_2$ plus noise' \citep{RothschildStieglitz1970,PrattMachina1997}. The term `noise' for $\varepsilon$ suggests that the variation in $Z_1$ is undesirable. Indeed, if $-Z_1$ and $-Z_2$ represent two investments with stochastic monetary payoffs, then every risk-averse decision maker with concave utility function will prefer $-Z_2$ to $-Z_1$. We avoid the `noise' terminology since the negative connotation of the term is not justified in the present context; by contrast, the following result indicates that being more volatile is highly desirable in the context of auto-calibrated mean forecasts.  

\begin{thm}\label{thm:1} Assume that $A$ and $B$ are both auto-calibrated mean forecasts. Then, $A$ dominates $B$ if and only if $X_A$ is greater than $X_B$ in convex order. 
\end{thm}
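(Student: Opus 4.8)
The plan is to read the result off Theorem \ref{thm:mean} by observing that auto-calibration annihilates the ``calibration'' summand of $\psi_j$. Indeed, under Definition \ref{def:auto} we have $\mathbb{E}(Y\mid X_j)=X_j$ almost surely for $j\in\{A,B\}$, so the second summand of $\psi_j(\theta)$ vanishes identically and
\[
\psi_j(\theta)\;=\;\frac{1}{2}\int_{\theta}^{\infty}\mathbb{P}(X_j>w)\,\mathrm{d}w\;=\;\frac{1}{2}\,\mathbb{E}\bigl[(X_j-\theta)^+\bigr],
\]
the last equality being the elementary identity $\mathbb{E}[(X_j-\theta)^+]=\int_0^\infty\mathbb{P}(X_j-\theta>t)\,\mathrm{d}t$; these quantities are finite because $X_A$ and $X_B$ are integrable. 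Theorem \ref{thm:mean} therefore says that $A$ dominates $B$ if and only if the stop-loss inequality $\mathbb{E}[(X_A-\theta)^+]\ge\mathbb{E}[(X_B-\theta)^+]$ holds for every $\theta\in\mathbb{R}$. I would also record that auto-calibration together with the tower property gives $\mathbb{E}(X_A)=\mathbb{E}(\mathbb{E}(Y\mid X_A))=\mathbb{E}(Y)=\mathbb{E}(X_B)$, so the two forecasts share a common mean.

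It then remains to show that ``$\mathbb{E}[(X_A-\theta)^+]\ge\mathbb{E}[(X_B-\theta)^+]$ for all $\theta$, together with $\mathbb{E}(X_A)=\mathbb{E}(X_B)$'' is equivalent to ``$X_A$ is greater than $X_B$ in convex order'' in the sense of Definition \ref{def:conv-order}. The ``if'' direction is immediate, since $x\mapsto(x-\theta)^+$ is convex. For ``only if'', I would use the standard representation of a convex function $\phi$: fixing a subgradient $c$ at $0$, one can write $\phi(x)-\phi(0)-cx=\int_{(0,\infty)}(x-\theta)^+\,\mathrm{d}\mu(\theta)+\int_{(-\infty,0)}(\theta-x)^+\,\mathrm{d}\nu(\theta)$ for nonnegative measures $\mu,\nu$ given by the (measure-valued) second derivative of $\phi$. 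The identity $\mathbb{E}[(\theta-X)^+]=\mathbb{E}[(X-\theta)^+]-\mathbb{E}(X)+\theta$ combined with the common mean turns the right-hand stop-loss inequalities into the left-hand ones, so Tonelli's theorem (all integrands nonnegative) yields $\mathbb{E}[\phi(X_A)-\phi(0)-cX_A]\ge\mathbb{E}[\phi(X_B)-\phi(0)-cX_B]$; adding back the affine term, whose expectation is the same under both forecasts, gives $\mathbb{E}[\phi(X_A)]\ge\mathbb{E}[\phi(X_B)]$ whenever both are finite, which is exactly convex order. Alternatively, this equivalence between the stop-loss (increasing convex) order plus equality of means and the convex order is classical and can be quoted directly from \cite{ShakedShanthiku2007}.

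The only genuinely delicate point is this second step --- the passage from the two families of stop-loss inequalities to the convex-order inequality --- and within it the bookkeeping about finiteness of the expectations and the interchange of expectation with integration against $\mu$ and $\nu$. Both are routine given the integrability of $X_A$ and $X_B$ and the nonnegativity of the truncated functions $(x-\theta)^+$ and $(\theta-x)^+$; everything preceding it is a direct substitution of the auto-calibration assumption into Theorem \ref{thm:mean}.
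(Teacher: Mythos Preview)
Your proposal is correct and follows essentially the same route as the paper: both use auto-calibration to kill the second summand of $\psi_j$ in Theorem~\ref{thm:mean}, reduce dominance to the integrated-survival (stop-loss) inequality, record that auto-calibration forces $\mathbb{E}(X_A)=\mathbb{E}(X_B)$, and then invoke the standard characterization of convex order via stop-loss transforms plus equal means. The only difference is cosmetic: the paper cites \cite{MullerRuschendo2001} for this last equivalence, whereas you either cite \cite{ShakedShanthiku2007} or sketch the integral-representation argument directly.
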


According to Theorem \ref{thm:1}, it is desirable for a forecast to be large in convex order: Given the assumption that forecasts are auto-calibrated, being large in convex order implies that the forecast is more variable and is based on a `larger' information set $\mathcal{A}_j$. Without the assumption of auto-calibration, a forecast could be more variable simply because of erratic variation (see Sections \ref{sec:gauss} and \ref{sec:model} below). In the case that $Y$ is binary and $X_A, X_B$ are discretely distributed with finite support, Theorem \ref{thm:1} coincides with \citet[Theorem 1]{DegrootFienberg1983}. However, Theorem \ref{thm:1} is much more widely applicable since it imposes no assumptions on the distributions of $Y$, $X_A$ and $X_B$. 
 
\begin{exmp}\label{exmp:1} Let $Y = Z_1 + Z_2 + Z_3 + Z_4$, where $\{Z_k\}_{k=1}^4$ are independent and identically distributed random variables with mean zero. The distribution may be non-Gaussian, may involve skewness and excess kurtosis, or could be discrete. Now let $X_A = Z_1 + Z_2$ and $X_B = Z_3$, such that both $A$ and $B$ are auto-calibrated for $Y$, and $X_A$ is greater than $X_B$ in convex order. By Theorem \ref{thm:1}, $A$ dominates $B$. This setup includes the example of \citet[p.~557]{Ehm2016} where $Z_k$ are all standard normal and dominance is established via calculations that exploit normality.
\end{exmp}

\begin{exmp}\label{exmp:2} Suppose that $X_A$ and $X_B$ are both auto-calibrated and normally distributed. If $\V{X_A} > \V{X_B}$, then normality implies that $X_A$ is greater than $X_B$ in convex order, so that $A$ dominates $B$ by Theorem \ref{thm:1}. This example generalizes \citet[Proposition 2]{Patton2017} since it is based on slightly weaker assumptions and establishes dominance under all consistent scoring functions instead of a subclass called exponential Bregman loss.
\end{exmp}

\begin{prop}\label{prop:he}
For $j = A,B$, let $X_j = \EC{Y}{\mathcal{F}_j},$ where $\mathcal{F}_B \subset \mathcal{F}_A$. Then  $X_A$ and $X_B$ are both auto-calibrated and $X_A$ is greater than $X_B$ in convex order.  
\end{prop}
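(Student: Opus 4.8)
The plan is to derive both conclusions from the single identity $X_B = \EC{X_A}{\mathcal{F}_B}$, which follows from the tower property together with the nesting $\mathcal{F}_B \subset \mathcal{F}_A$: since $X_A = \EC{Y}{\mathcal{F}_A}$,
$$\EC{X_A}{\mathcal{F}_B} = \EC{\EC{Y}{\mathcal{F}_A}}{\mathcal{F}_B} = \EC{Y}{\mathcal{F}_B} = X_B.$$
First I would check auto-calibration. Fix $j \in \{A,B\}$. Since $X_j$ is $\mathcal{F}_j$-measurable, $\sigma(X_j) \subseteq \mathcal{F}_j$, and a further application of the tower property gives
$$\EC{Y}{X_j} = \EC{\EC{Y}{\mathcal{F}_j}}{X_j} = \EC{X_j}{X_j} = X_j$$
almost surely, which is Definition \ref{def:auto}.

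Next I would establish the convex order. One route is conditional Jensen's inequality: for any convex $\phi$ for which the relevant expectations exist, the identity $X_B = \EC{X_A}{\mathcal{F}_B}$ gives $\phi(X_B) = \phi(\EC{X_A}{\mathcal{F}_B}) \le \EC{\phi(X_A)}{\mathcal{F}_B}$ almost surely, and taking expectations yields $\E{\phi(X_B)} \le \E{\phi(X_A)}$, i.e.\ Definition \ref{def:conv-order}. Alternatively, and perhaps more transparently, one can invoke Strassen's theorem as quoted after Definition \ref{def:conv-order}, taking $Z_1' = X_A$ and $Z_2' = X_B$ on the original probability space; it then suffices to verify $\EC{X_A}{X_B} = X_B$, which holds because $\sigma(X_B) \subseteq \mathcal{F}_B$ and hence $\EC{X_A}{X_B} = \EC{\EC{X_A}{\mathcal{F}_B}}{X_B} = \EC{X_B}{X_B} = X_B$.

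I do not expect a substantial obstacle here: the argument is essentially bookkeeping with the tower property, and the only point requiring care is tracking the nesting of $\sigma$-algebras, in particular the inclusions $\sigma(X_j) \subseteq \mathcal{F}_j$ and $\sigma(X_B) \subseteq \mathcal{F}_B \subset \mathcal{F}_A$. Integrability of $X_A$, $X_B$ and $Y$ is assumed throughout, so all conditional expectations are well defined, and the convex-order statement is, as usual, understood to range over those convex $\phi$ for which both expectations exist.
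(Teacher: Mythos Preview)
Your proposal is correct and essentially matches the paper's proof: both verify auto-calibration via $\sigma(X_j)\subseteq\mathcal{F}_j$ and the tower property, and both obtain convex order by showing $\EC{X_A}{X_B}=X_B$ and invoking Strassen's characterization. The only cosmetic differences are that the paper reaches $\EC{X_A}{X_B}=X_B$ by towering through $\mathcal{F}_A$ (using $\sigma(X_B)\subset\mathcal{F}_A$ and then auto-calibration of $X_B$) rather than through your intermediate identity $X_B=\EC{X_A}{\mathcal{F}_B}$, and it does not mention your Jensen alternative.
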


Examples \ref{exmp:1} and \ref{exmp:2} both feature non-nested information sets.
{Proposition \ref{prop:he} establishes that two forecasts with nested information sets satisfy the auto-calibration and convex order conditions that underlie Theorem \ref{thm:1}. The latter} then states that $X_A$ dominates $X_B$, as would be expected given that $X_A$ has access to a larger information set and both forecasts are correctly specified. The result of \citet[final line of Corollary 2]{Holzmann2014} uses the same setup as Proposition \ref{prop:he} above, and is thus a special case of Theorem \ref{thm:1}. Hence, Theorem \ref{thm:1} provides sufficient conditions for forecast dominance that are weaker than the ones by \citeauthor{Holzmann2014}. However, the result of \citeauthor{Holzmann2014} applies to general functionals, whereas we focus on the mean functional. The following example concerns forecasts made at different points in time, which is an important special case of nested information sets in practice. 

\begin{exmp}\label{exmp:3}
Let $Y_t = a~Y_{t-1} + \varepsilon_t,$ where $|a| < 1$ and $\varepsilon_t$ is independent and identically distributed with mean zero and variance $\sigma^2$, and let $\mathcal{F}_{t}$ be the information set generated by observations until time $t$. Suppose $X_{tA} = \EC{Y_t}{\mathcal{F}_{t-1}} = a~Y_{t-1}$ and $X_{tB} = \EC{Y_t}{\mathcal{F}_{t-h}} = a^h~Y_{t-h}$ for some $h \in \{2,3,\ldots\}$. Then $Y_t, X_{tA}$ and $X_{tB}$ are all strictly stationary time series, and $\mathcal{F}_{t-h} \subset \mathcal{F}_{t-1}$. Proposition \ref{prop:he} thus implies that both forecasts are auto-calibrated, and that $X_{tA}$ is greater than $X_{tB}$ in convex order. Hence, the variance of $X_{tA}$ exceeds that of $X_{tB}$, which also follows from Corollary 2 of \cite{PattonTimmermann2012}.
\end{exmp}

Finally, the following corollary describes a simple implication of Theorem \ref{thm:1} that is closely related to empirical practice in econometrics.

\begin{cor}\label{cor:mz}
Consider MZ regressions as in Equation (\ref{mzreg}), conducted separately for forecast $j \in \{A, B\}$. Suppose that $A$ and $B$ satisfy the conditions of Theorem \ref{thm:1}. Then in population, the MZ regression for $A$ attains a higher $R^2$ than the one for $B$. 
\end{cor}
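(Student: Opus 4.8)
The plan is to evaluate the population $R^2$ of each Mincer--Zarnowitz regression in closed form and to show that it reduces to a ratio of variances, after which the convex order hypothesis finishes the argument. First I would recall the elementary fact that the population coefficient of determination of the ordinary least squares regression of $Y$ on a single regressor $X_j$ with an intercept equals the squared correlation of $X_j$ and $Y$, i.e.\ $R_j^2 = \mathrm{Cov}(X_j,Y)^2 / \big(\V{X_j}\,\V{Y}\big)$ for $j \in \{A,B\}$ (assuming the non-degenerate case $\V{Y} > 0$; when $\V{X_j} = 0$ both sides are $0$). This holds irrespective of whether the population MZ null $(\alpha,\beta) = (0,1)$ is true.

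The crux is the covariance identity implied by auto-calibration. Using Definition \ref{def:auto}, $\EC{Y}{X_j} = X_j$ almost surely, so the tower property gives $\E{Y} = \E{\EC{Y}{X_j}} = \E{X_j}$ and $\E{X_j Y} = \E{X_j \EC{Y}{X_j}} = \E{X_j^2}$, hence $\mathrm{Cov}(X_j,Y) = \E{X_j^2} - \E{X_j}^2 = \V{X_j}$. Substituting this into the formula for $R_j^2$ collapses it to $R_j^2 = \V{X_j}^2/\big(\V{X_j}\,\V{Y}\big) = \V{X_j}/\V{Y}$; that is, in population the MZ $R^2$ for forecast $j$ is exactly the fraction of $\V{Y}$ accounted for by $\V{X_j}$.

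To conclude, I would invoke Theorem \ref{thm:1}: under the stated hypotheses $X_A$ is greater than $X_B$ in convex order, which (as recorded in Section \ref{sec:auto}) implies $\V{X_A} \ge \V{X_B}$. Since $\V{Y}$ is common to both regressions, $R_A^2 = \V{X_A}/\V{Y} \ge \V{X_B}/\V{Y} = R_B^2$; and if $X_A$ and $X_B$ are not equal in distribution, convex order together with the strict convexity of $z \mapsto z^2$ yields $\V{X_A} > \V{X_B}$, hence the strict inequality $R_A^2 > R_B^2$. Every step is a routine manipulation, so there is no real obstacle here; the one point worth emphasizing is that the reduction of $R^2$ to a pure variance ratio is driven entirely by the auto-calibration identity $\mathrm{Cov}(X_j,Y) = \V{X_j}$, equivalently by the population MZ slope being $1$.
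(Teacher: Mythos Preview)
Your argument is correct and follows essentially the same route as the paper: use auto-calibration to obtain $\mathrm{Cov}(X_j,Y)=\V{X_j}$, identify $R_j^2$ with the squared correlation, and then invoke the variance implication of convex order to conclude $R_A^2 \ge R_B^2$. Your write-up is more detailed (spelling out the tower-property computation and the reduction $R_j^2 = \V{X_j}/\V{Y}$, plus the strict-inequality remark), but the underlying idea is identical to the paper's proof.
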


This relates to the empirical literature on forecasting financial volatility, where $R^2$s of MZ regressions are commonly used to assess forecasting ability of alternative methods \citep[e.g.][Tables III.A and III.B]{AndersenEtAl2003}. See Section \ref{sec:data-vola} for an empirical illustration.

\section{Forecast Dominance under Normality}\label{sec:gauss}

Auto-calibration essentially rules out uninformative variation (`noise') in a forecast that may result from an overfitted statistical model, for example.

\begin{exmp}\label{ex:unc}
Let $Y = X_A + \varepsilon, $ where $X_A$ and $\varepsilon$ are independently standard normal. Suppose forecaster $A$ quotes $X_A$ as a mean forecast for $Y$, and forecaster $B$ quotes $X_B = X_A + \zeta,$ where $\zeta \sim \mathcal{N}(0, \sigma^2_\zeta)$, independently of $X_A$ and $\varepsilon$. One obtains easily that $\EC{Y}{X_B} = X_B/(1 + \sigma^2_\zeta)$, which implies that forecast $B$ is uncalibrated. 
\end{exmp}

In Example \ref{ex:unc}, intuition suggests that $A$ is a better forecast than $B$ since the latter simply adds the noise term $\zeta$ on top of the former. Theorem \ref{thm:1} cannot be used to derive this statement
since $B$ is uncalibrated. In this section and in Section \ref{sec:model}, we dispense with the auto-calibration assumption. In order to arrive at interpretable conditions, we investigate the scenario in which the forecast $X_j, j \in \{A, B\}$ and the realization $Y$ follow a bivariate normal distribution, such that
\begin{equation}
\begin{pmatrix} X_j \\ Y \end{pmatrix} \sim \mathcal{N}\left(\begin{pmatrix} \mu_j \\ \mu_Y \end{pmatrix}, \begin{pmatrix} \sigma^2_j & \rho_{Yj}~\sigma_j\sigma_Y \\ \rho_{Yj}~\sigma_j\sigma_Y & \sigma^2_Y \end{pmatrix}\right),\label{eq:gauss}
\end{equation}
where $\rho_{Yj} \in [-1,1]$ is the correlation between $X_j$ and $Y$. 

The Gaussian setup is similar to \cite{Satopaa2016} who motivate joint normality of forecasts and realizations from a situation in which forecasters observe small bits ('particles') of the information that generates the predictand; see their Section 3.2. 

Forecast dominance does not depend on the dependence structure between the forecasts. Hence Equation (\ref{eq:gauss}) refers to the pair $(X_j, Y)'$ only; the joint distribution of $(X_A, X_B)'$ is left unspecified, and may be non-Gaussian. The distribution in (\ref{eq:gauss}) is an unconditional one, and does not specify the dependence (or independence) across forecast instances. See {Example A.1 in the Online Appendix} for a stationary time series illustration that fits into the Gaussian framework.

We assume that $\mu_Y = \mu_A = \mu_B$, which means that forecasts $A$ and $B$ correctly assess the unconditional mean of $Y$. This simplifies our analysis but does not seem restrictive in most applications. The setup in Equation (\ref{eq:gauss}) allows for a wide range of scenarios in terms of forecast accuracy. In particular, the correlation parameter $\rho_{Yj}$ may be positive or negative, and there is no prespecified relation between the variance parameters $\sigma_j$ and $\sigma_Y$. This modeling approach hence is capable of describing the behavior of imperfect forecasts.

\begin{prop}\label{prop:bivnormal}
	Assume that for $j \in \{A, B\}$ the distribution of $(X_j, Y)$ is bivariate normal as in Equation (\ref{eq:gauss}). Then 
	\begin{multline}\label{eq:bivnormal}
	\mathbb{E}(S_\theta(X_B, Y)) - \mathbb{E}(S_\theta(X_A, Y))
	= \frac{\sigma_Y}{2}\left\{\rho_{YA}~\varphi\left(\frac{\theta-\mu_Y}{\sigma_A}\right)-
	\rho_{YB}~\varphi\left(\frac{\theta-\mu_Y}{\sigma_B}\right)\right\}   \\
+\frac{(\theta-\mu_Y)}{2}~\left\{\Phi\left(\frac{\theta-\mu_Y}{\sigma_A}\right) - \Phi\left(\frac{\theta-\mu_Y}{\sigma_B}\right)\right\}, 
	\end{multline}
	{where $S_{\theta}(x,y)$ is the elementary score function defined at (\ref{esmean}), and} $\varphi$ and $\Phi$ are the probability density and CDF of a standard normal distribution, respectively.
	\label{prop:norm}
\end{prop}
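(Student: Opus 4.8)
The plan is to compute $\E{S_\theta(X_j, Y)}$ directly for each $j \in \{A, B\}$ under the bivariate normal assumption \eqref{eq:gauss}, and then take the difference. Recall from the remark following Theorem \ref{thm:mean} that $\psi_j(\theta) = -\E{S_\theta(X_j, Y)}$, so the claimed identity \eqref{eq:bivnormal} is equivalent to $\psi_A(\theta) - \psi_B(\theta)$ equalling the right-hand side. I would therefore start from the closed form
$$
\psi_j(\theta) = \frac{1}{2}\int_\theta^\infty \prob(X_j > w)\diff w + \frac{1}{2}\E{(\EC{Y}{X_j} - X_j)\mathbf{1}_{(X_j > \theta)}}
$$
given in Theorem \ref{thm:mean}, and evaluate each of the two summands using Gaussian identities.

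For the first summand, $\prob(X_j > w) = 1 - \Phi((w - \mu_Y)/\sigma_j)$ since $\mu_j = \mu_Y$, and the integral $\int_\theta^\infty (1 - \Phi((w-\mu_Y)/\sigma_j))\diff w$ is a standard one: substituting $u = (w-\mu_Y)/\sigma_j$ and using $\int_t^\infty (1-\Phi(u))\diff u = \varphi(t) - t(1-\Phi(t))$, it evaluates to $\sigma_j\,\varphi((\theta-\mu_Y)/\sigma_j) - (\theta-\mu_Y)(1 - \Phi((\theta-\mu_Y)/\sigma_j))$. For the second summand, bivariate normality gives $\EC{Y}{X_j} = \mu_Y + \rho_{Yj}\frac{\sigma_Y}{\sigma_j}(X_j - \mu_Y)$, so $\EC{Y}{X_j} - X_j = (\rho_{Yj}\frac{\sigma_Y}{\sigma_j} - 1)(X_j - \mu_Y)$. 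Hence I need $\E{(X_j - \mu_Y)\mathbf{1}_{(X_j > \theta)}}$, which for $X_j \sim \mathcal{N}(\mu_Y, \sigma_j^2)$ equals $\sigma_j\,\varphi((\theta-\mu_Y)/\sigma_j)$ by the standard truncated-normal mean formula. Putting these together, $\psi_j(\theta) = \frac{1}{2}\sigma_j\varphi(\cdot_j) - \frac{1}{2}(\theta-\mu_Y)(1 - \Phi(\cdot_j)) + \frac{1}{2}(\rho_{Yj}\frac{\sigma_Y}{\sigma_j} - 1)\sigma_j\varphi(\cdot_j)$, where $\cdot_j$ abbreviates $(\theta-\mu_Y)/\sigma_j$. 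The two $\sigma_j\varphi(\cdot_j)$ terms combine, and the $(\theta-\mu_Y)$ terms simplify, yielding $\psi_j(\theta) = \frac{1}{2}\rho_{Yj}\sigma_Y\varphi(\cdot_j) + \frac{1}{2}(\theta-\mu_Y)\Phi(\cdot_j) - \frac{1}{2}(\theta-\mu_Y)$. Subtracting $\psi_B$ from $\psi_A$, the constant $-\frac{1}{2}(\theta-\mu_Y)$ cancels and one is left precisely with the right-hand side of \eqref{eq:bivnormal}.

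There is no serious obstacle here; the result is essentially a bookkeeping exercise in Gaussian moment identities once Theorem \ref{thm:mean} is in hand. The only points requiring a little care are: (i) justifying the conditional-expectation formula $\EC{Y}{X_j} = \mu_Y + \rho_{Yj}(\sigma_Y/\sigma_j)(X_j - \mu_Y)$ and noting it degenerates gracefully if $\sigma_j = 0$ (in which case $X_j$ is a.s.\ constant and both summands are handled by continuity or treated as a trivial limiting case); (ii) checking convergence of the improper integral $\int_\theta^\infty \prob(X_j > w)\diff w$, which is immediate from the Gaussian tail bound $1 - \Phi(t) \le \varphi(t)/t$; and (iii) ensuring the algebraic cancellation of the $(\theta - \mu_Y)$ and the $\varphi$ prefactors is carried out correctly — I would recommend writing $t_j := (\theta-\mu_Y)/\sigma_j$ throughout to keep the arithmetic transparent. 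One could alternatively bypass $\psi_j$ and integrate $S_\theta(x,y) = \frac{1}{2}(\theta - y)\mathbf{1}_{(x > \theta)}$ directly against the bivariate normal density, splitting $\E{(\theta - Y)\mathbf{1}_{(X_j > \theta)}} = \theta\prob(X_j > \theta) - \E{Y\mathbf{1}_{(X_j > \theta)}}$ and using the tower property $\E{Y\mathbf{1}_{(X_j>\theta)}} = \E{\EC{Y}{X_j}\mathbf{1}_{(X_j>\theta)}}$; this reproduces the same two integrals and is arguably the cleanest route.
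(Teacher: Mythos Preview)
Your proposal is correct and follows essentially the same route as the paper: both start from the expression for $\psi_j(\theta)$ in Theorem \ref{thm:mean}, insert the Gaussian conditional mean $\EC{Y}{X_j}=\mu_Y+\rho_{Yj}(\sigma_Y/\sigma_j)(X_j-\mu_Y)$, use the truncated-normal identity $\E{(X_j-\mu_Y)\mathbf{1}_{(X_j>\theta)}}=\sigma_j\varphi((\theta-\mu_Y)/\sigma_j)$ and the integral $\int_t^\infty(1-\Phi(u))\diff u=\varphi(t)-t(1-\Phi(t))$, and then simplify under $\mu_j=\mu_Y$ to reach the same closed form $2\psi_j(\theta)=\rho_{Yj}\sigma_Y\varphi((\theta-\mu_Y)/\sigma_j)-(\theta-\mu_Y)(1-\Phi((\theta-\mu_Y)/\sigma_j))$.
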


{By \citet[Theorem 1b and Corollary 1b]{Ehm2016}, $A$ dominates $B$ if the left hand side of \eqref{eq:bivnormal} is non-negative for all $\theta \in \mathbb{R}$.} The expression in \eqref{eq:bivnormal} yields several sets of sufficient conditions for forecast dominance, where we use the notation $\beta_j = \rho_{Yj}~\sigma_Y/\sigma_j$ to denote the population slope coefficient in a MZ regression of $Y$ on $X_j$ as in Equation (\ref{mzreg}). The condition $\beta_j = 1$ is necessary and sufficient for auto-calibration. 
\begin{description}
	\item[\textbf{Case 1}] Let $\sigma_A \ge \sigma_B$, and assume that {$\beta_B \le 1 \le \beta_A$.} Then $A$ dominates $B$.
	\item[\textbf{Case 2}] Let $\sigma_A \le \sigma_B$.
	\begin{description}
		\item[\textbf{Case 2a}] Assume that {$0 \le \beta_A,\beta_B \le 1$. If $\beta_A\sigma_A^2 \ge \beta_B\sigma_B^2$,} then $A$ dominates $B$.
		\item[\textbf{Case 2b}] If {$\beta_B \le 0 \le \beta_A$}, then $A$ dominates $B$.
	\end{description}
	\item[\textbf{Case 3}] Suppose that $\beta_A\sigma_A=\beta_B\sigma_B$, {{and that either $\beta_A, \beta_B > 1$ or $\beta_A, \beta_B < 1$. Then the forecast $j$ for which $|\beta_j - 1|$ is smaller dominates the other.}}
	\item[\textbf{Case 4}] If $\sigma_A = \sigma_B$, the forecast $j$ for which {$\beta_{j}$} is higher dominates the other. 
\end{description}
Justification of these claims is given in the Appendix. {For two auto-calibrated forecasts ($\beta_A = \beta_B = 1$)}, Case 1 implies that the one with higher variance is dominant, which echoes the statement of Theorem \ref{thm:1}. (Since both forecasts are Gaussian with the same mean, having higher variance is the same as being greater in convex order.) However, Case 1 does not require auto-calibration. It implies that there may be dominance relations among two uncalibrated forecasts, or dominance of an auto-calibrated forecast over an uncalibrated competitor, or vice versa. Case 2a describes a situation in which $A$ has lower variance than $B$, but at the same time has higher covariance with $Y$. This suggests that $A$ has a more favorable signal-to-noise ratio than $B$, explaining dominance of $A$ over $B$. In Case 2b, $B$ is a particularly poor forecast, featuring high variance and negative correlation with $Y$. {{Case 3 describes situations in which both forecasts have the same correlation with $Y$, and both are uncalibrated. In these situations, the forecast that comes closer to being auto-calibrated is dominant.}} Finally, Case 4 describes a simple condition for dominance if both forecasts have the same variance.

Proposition \ref{prop:norm} yields a simple necessary condition for forecast dominance: For $A$ to dominate $B$, it must hold that $\rho_{YA} \ge \rho_{YB}$. (This can be seen by evaluating the expected score difference in Proposition \ref{prop:norm} at $\theta = \mu_Y$.) If the forecast parameters satisfy this necessary condition but can not be classified into one of the four cases presented above, it is unclear whether a dominance relation exists. In this situation, one can use the result of Proposition 4.1 for an informal numerical check of dominance; see Example A.2 in the Online Appendix for an illustration.

A major implication of the Gaussian case is that auto-calibration -- which underlies Section \ref{sec:auto}, as well as all of the previous literature -- is not generally required to establish forecast dominance. In particular, there may well be dominance relations among forecasts generated from mis-specified statistical models; see Section \ref{sec:model}.

\section{Forecasts based on a Common Information Set} \label{sec:model}

{The results in Section \ref{sec:gauss} do not require auto-calibration, but require joint Gaussianity of forecasts and realizations. In this section, we present a result that requires neither auto-calibration nor Gaussianity, but assumes that both forecasts can be represented as $\EC{Y}{\mathcal{F}}$ plus noise, where the information set $\mathcal{F}$ is common across forecasting methods. The forecast methods can be viewed as different ways of exploiting $\mathcal{F}$, based on statistical models using alternative estimation algorithms or functional form assumptions, for example.

\begin{thm}\label{thm:sign-condition}
	Let $\mathcal{F}\subset \mathcal{A}$ be a $\sigma$-algebra, and let
	\begin{align*}
	Y &= \EC{Y}{\mathcal{F}} + \varepsilon, \quad X_j = \EC{Y}{\mathcal{F}} + \eta_j, \quad j\in \{A,B\},
	\end{align*}	
	where $\EC{\varepsilon}{\mathcal{F}} = 0$, and $\eta_j$ is conditionally independent of $\varepsilon$ given $\mathcal{F}$. 
	Assume that, conditionally on $\mathcal{F}$, the distributions of $\eta_A$ and $\eta_B$ are both symmetric around zero and are such that $|\eta_A|$ is smaller than $|\eta_B|$ with respect to first order stochastic dominance. Then A dominates B.
\end{thm}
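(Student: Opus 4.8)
The plan is to verify the condition of Theorem~\ref{thm:mean}, i.e., to show $\psi_A(\theta) \ge \psi_B(\theta)$ for all $\theta \in \mathbb{R}$. I would first simplify each $\psi_j$ using the structure of the model. Since $X_j = \EC{Y}{\mathcal{F}} + \eta_j$ and $Y = \EC{Y}{\mathcal{F}} + \varepsilon$ with $\EC{\varepsilon}{\mathcal{F}} = 0$ and $\eta_j$ conditionally independent of $\varepsilon$ given $\mathcal{F}$, a short computation gives $\EC{Y}{X_j, \mathcal{F}} = \EC{Y}{\mathcal{F}}$ (conditioning on $X_j$ in addition to $\mathcal{F}$ adds nothing, because $\eta_j$ carries no information about $\varepsilon$). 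Writing $M = \EC{Y}{\mathcal{F}}$, this yields $\EC{Y}{X_j} - X_j = \EC{M - X_j}{X_j} = -\EC{\eta_j}{X_j}$, so the calibration summand of $\psi_j$ becomes $-\tfrac12\E{\EC{\eta_j}{X_j}\mathbf{1}_{(X_j > \theta)}} = -\tfrac12\E{\eta_j \mathbf{1}_{(M + \eta_j > \theta)}}$. Hence
\[
\psi_j(\theta) = \frac12 \int_\theta^\infty \prob(M + \eta_j > w)\diff w - \frac12 \E{\eta_j \mathbf{1}_{(M + \eta_j > \theta)}}.
\]

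Next I would condition everything on $\mathcal{F}$ and reduce to a pointwise (in the realized value $m$ of $M$) comparison. Fix $m \in \mathbb{R}$ and define, for a symmetric-around-zero random variable $\eta$,
\[
g(\theta; m, \eta) = \frac12 \int_\theta^\infty \prob(m + \eta > w)\diff w - \frac12 \E{\eta \mathbf{1}_{(m + \eta > \theta)}}.
\]
By the tower property it suffices to show $g(\theta; m, \eta_A) \ge g(\theta; m, \eta_B)$ for every $m$ and every $\theta$ (using the conditional laws of $\eta_A, \eta_B$ given $\mathcal{F}$), since integrating over the law of $M$ and then over $\mathcal{F}$ preserves the inequality. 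The integral term $\int_\theta^\infty \prob(m+\eta>w)\diff w = \E{(m+\eta-\theta)^+}$, and a small manipulation of the second term, lets me rewrite $g$ in a form that is manifestly a function of the distribution of $\eta$ alone through expressions like $\E{(\eta - t)^+}$ and $\E{\eta\,\mathbf{1}_{(\eta > t)}}$ evaluated at $t = \theta - m$. Using symmetry of $\eta$ around zero one can fold the negative-$\eta$ contributions onto the positive side and express $g$ purely in terms of the distribution of $|\eta|$ — concretely, in terms of the survival function of $|\eta|$ and its integral, evaluated at points determined by $|\theta - m|$. This is the step where I expect the bookkeeping to be delicate: one has to treat the cases $\theta \ge m$ and $\theta < m$ and keep careful track of signs, but symmetry should make the two cases mirror images of each other.

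Once $g(\theta; m, \eta)$ is written as a functional of the law of $|\eta|$ that is monotone (in the appropriate direction) under first-order stochastic dominance — I anticipate it takes the shape of something like $\tfrac14\E{(|\eta| - s)^+} + (\text{a nonnegative, nondecreasing function of } |\eta| \text{ integrated against a nonnegative weight})$ for $s = |\theta - m| \ge 0$ — the conclusion follows: since $|\eta_A|$ is smaller than $|\eta_B|$ in first-order stochastic dominance and these functionals are monotone in the stochastic order, we get $g(\theta; m, \eta_A) \ge g(\theta; m, \eta_B)$ (the smaller-in-FSD variable produces the \emph{larger} value of $\psi$, consistent with the intuition that less noise is better). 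Integrating back over $m$ and over $\mathcal{F}$ gives $\psi_A(\theta) \ge \psi_B(\theta)$ for all $\theta$, and Theorem~\ref{thm:mean} then yields that $A$ dominates $B$.

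The main obstacle is the middle step: showing that the combined expression $g(\theta; m, \eta)$ — which mixes the ``good'' variability term and the ``bad'' miscalibration term — collapses, under the symmetry assumption, into a single functional of $|\eta|$ that is monotone under first-order stochastic dominance. A priori the two terms pull in opposite directions (more spread in $\eta$ inflates the integral term but also worsens calibration), so the key insight must be that symmetry around zero forces a cancellation that leaves only a monotone dependence on the magnitude $|\eta|$. Getting the right change of variables and the sign of the residual weight is where the real content of the proof lies.
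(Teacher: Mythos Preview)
Your approach is correct and, once carried out, coincides with the paper's proof. The ``delicate bookkeeping'' you anticipate is in fact a one-line cancellation: with $t = \theta - m$, one has $\E{(\eta - t)^+} = \E{\eta\,\mathbf{1}_{(\eta>t)}} - t\,\prob(\eta > t)$, and hence
\[
g(\theta; m, \eta) = \tfrac12\E{(\eta - t)^+} - \tfrac12\E{\eta\,\mathbf{1}_{(\eta>t)}} = -\tfrac{t}{2}\,\prob(\eta > t).
\]
No case split or folding onto $|\eta|$ is required; the variability and miscalibration terms cancel down to this single expression. The paper arrives at the identical formula by computing the elementary score directly rather than routing through Theorem~\ref{thm:mean}: conditioning $2S_\theta(X_j,Y) = (\theta - Y)\mathbf{1}_{(X_j > \theta)}$ on $\mathcal{F}$ and using the conditional independence of $\eta_j$ and $\varepsilon$ gives $2\E{S_\theta(X_j,Y)} = \E{(\theta - W)(1 - F_j^{\mathcal{F}}(\theta - W))}$, which is exactly $-2\psi_j(\theta)$ in your notation. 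The final pointwise inequality $g(\theta;m,\eta_A) \ge g(\theta;m,\eta_B)$ then reduces to $(\theta - m)\bigl(F_A^{\mathcal{F}}(\theta - m) - F_B^{\mathcal{F}}(\theta - m)\bigr) \ge 0$, which holds because symmetry around zero together with $|\eta_A|$ first-order stochastically smaller than $|\eta_B|$ forces $F_A^{\mathcal{F}}(z) - F_B^{\mathcal{F}}(z)$ to have the same sign as $z$; the paper invokes a standard dispersive-order result (Shaked--Shanthikumar, Theorem 3.D.1) for this last step.
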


Conditional independence of $\eta_j$ and $\varepsilon$ says that, given the information $\mathcal{F}$, $\eta_j$ must not contain information about $\varepsilon$. This requirement seems natural given our interpretation of $\eta_j$ {as a modeling error}. {The assumptions about $\eta_A$ and $\eta_B$ imply that the former is less variable \citep[Section 3.D]{ShakedShanthiku2007}. In the special case that $\eta_j|\mathcal{F} \sim \mathcal{N}(0,\sigma^2_j)$, the condition is satisfied if $\sigma^2_A < \sigma^2_B$. The assumption that $\EC{\eta_j}{\mathcal{F}} = 0$ implies that modeling errors are unsystematic, which seems plausible in the context of overfitted statistical models, for example. The theorem nests the case that $\eta_j = 0$ almost surely for one model $j \in \{A, B\}$. 

In contrast to Theorem \ref{thm:1} and Proposition \ref{prop:bivnormal}, the conditions of Theorem \ref{thm:sign-condition} are not directly testable for empirical data. However, we present testable implications.

\begin{prop}\label{prop:testable}
	Under the conditions of Theorem \ref{thm:sign-condition}, the following statements hold:
	\begin{itemize}
		\item[(a)] $\E{X_A} = \E{X_B} = \E{Y}$.
		\item[(b)] $\mathbb{C}ov(X_j,Y) \le \V{X_j}$ for $j \in \{A,B\},$ that is, both forecasts attain a slope coeffient $\beta_j \le 1$ in MZ regressions.
		\item[(c)] $\E{X_B^{2k}} \ge \E{X_A^{2k}}$ for all $k \in \mathbb{N}$.
	\end{itemize}
\end{prop}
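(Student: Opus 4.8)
The plan is to establish each of the three statements directly from the structural representation in Theorem \ref{thm:sign-condition}, namely $Y = \EC{Y}{\mathcal{F}} + \varepsilon$ and $X_j = \EC{Y}{\mathcal{F}} + \eta_j$ with $\EC{\varepsilon}{\mathcal{F}} = 0$, $\eta_j$ conditionally independent of $\varepsilon$ given $\mathcal{F}$, and $\eta_j \mid \mathcal{F}$ symmetric around zero. Write $M = \EC{Y}{\mathcal{F}}$ for brevity. The first observation I would record is that symmetry of $\eta_j \mid \mathcal{F}$ around zero forces $\EC{\eta_j}{\mathcal{F}} = 0$ (assuming the conditional mean exists, which follows from integrability of $X_j$), hence $\EC{\eta_j}{M} = 0$ by the tower property since $\sigma(M) \subseteq \mathcal{F}$. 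This single fact does most of the work for parts (a) and (b).

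For part (a), take unconditional expectations in $X_j = M + \eta_j$: by $\E{\eta_j} = \E{\EC{\eta_j}{\mathcal{F}}} = 0$ we get $\E{X_j} = \E{M} = \E{\EC{Y}{\mathcal{F}}} = \E{Y}$, and the same holds for both $j = A$ and $j = B$. For part (b), I would compute $\mathbb{C}\mathrm{ov}(X_j, Y)$ using $Y = M + \varepsilon$ and $X_j = M + \eta_j$. After centering (part (a) lets me assume $\E{M} = \E{Y}$), $\mathbb{C}\mathrm{ov}(X_j,Y) = \E{(M + \eta_j)(M + \varepsilon)} - (\E{Y})^2$, which expands into $\V{M}$ plus cross terms $\E{M\varepsilon}$, $\E{M\eta_j}$, $\E{\eta_j\varepsilon}$. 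Each cross term vanishes: $\E{M\varepsilon} = \E{M\,\EC{\varepsilon}{\mathcal{F}}} = 0$ since $M$ is $\mathcal{F}$-measurable; $\E{M\eta_j} = \E{M\,\EC{\eta_j}{\mathcal{F}}} = 0$ similarly; and $\E{\eta_j\varepsilon} = \E{\EC{\eta_j\varepsilon}{\mathcal{F}}} = \E{\EC{\eta_j}{\mathcal{F}}\EC{\varepsilon}{\mathcal{F}}} = 0$ by the conditional independence of $\eta_j$ and $\varepsilon$ given $\mathcal{F}$. Hence $\mathbb{C}\mathrm{ov}(X_j, Y) = \V{M}$. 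Similarly $\V{X_j} = \V{M} + \V{\eta_j} + 2\E{M\eta_j} - 2\E{M}\E{\eta_j} = \V{M} + \V{\eta_j} \ge \V{M}$, giving $\mathbb{C}\mathrm{ov}(X_j,Y) = \V{M} \le \V{X_j}$ and therefore $\beta_j = \mathbb{C}\mathrm{ov}(X_j,Y)/\V{X_j} \le 1$.

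Part (c) is the one requiring more care and is where I expect the main obstacle. The claim $\E{X_B^{2k}} \ge \E{X_A^{2k}}$ should follow from the fact that, conditionally on $\mathcal{F}$, $X_A = M + \eta_A$ and $X_B = M + \eta_B$ are both symmetric perturbations of the same constant $M$, with $|\eta_A|$ stochastically dominated by $|\eta_B|$. The strategy is to condition on $\mathcal{F}$ and show $\EC{X_A^{2k}}{\mathcal{F}} \le \EC{X_B^{2k}}{\mathcal{F}}$ pointwise, then take unconditional expectations. Fix a value $m$ of $M$; I need $\E{(m + \eta_A)^{2k}} \le \E{(m + \eta_B)^{2k}}$ where the expectations are under the conditional laws. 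Expanding $(m + \eta)^{2k}$ by the binomial theorem and using that $\eta$ is symmetric (so all odd moments vanish), one gets $\E{(m+\eta)^{2k}} = \sum_{i=0}^{k} \binom{2k}{2i} m^{2k-2i}\,\E{\eta^{2i}}$, a nonnegative combination of the even moments $\E{\eta^{2i}}$. So it suffices that $\EC{\eta_A^{2i}}{\mathcal{F}} \le \EC{\eta_B^{2i}}{\mathcal{F}}$ for each $i$, and this in turn follows from $|\eta_A| \preceq_{\mathrm{st}} |\eta_B|$ (conditionally on $\mathcal{F}$) applied to the increasing function $t \mapsto t^{2i}$ on $[0,\infty)$, since $\EC{\eta_j^{2i}}{\mathcal{F}} = \EC{|\eta_j|^{2i}}{\mathcal{F}}$. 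The delicate points to get right are: justifying the interchange of conditional expectation with the finite binomial sum (routine, finitely many terms, assuming the relevant moments are finite), handling the case where higher moments are infinite (in which case the inequality holds trivially with $+\infty$ on the right), and being explicit that first-order stochastic dominance of the conditional laws of $|\eta_A|$ and $|\eta_B|$ is what is assumed in the theorem hypothesis, so that the monotone-function characterization of stochastic dominance applies conditionally. Once the pointwise-in-$\mathcal{F}$ inequality is in hand, monotonicity of expectation finishes the proof.
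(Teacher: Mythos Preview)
Your proposal is correct and follows essentially the same approach as the paper. For parts (a) and (b) the paper simply declares them ``immediate given the setup,'' whereas you spell out the computations explicitly; for part (c) both arguments use the binomial expansion of $(M+\eta_j)^{2k}$, kill the odd-power terms of $\eta_j$ via conditional symmetry, and bound the even-power terms via the first-order stochastic dominance of $|\eta_A|$ by $|\eta_B|$ conditional on $\mathcal{F}$---the only cosmetic difference is that you condition on $\mathcal{F}$ first and then expand, while the paper expands first and handles the terms $W^c\eta_j^d$ according to the parity of $c,d$.
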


Theorem \ref{thm:sign-condition} has implications for out-of-sample prediction in linear models. 

\begin{exmp}
	Let
	\[
	Y = Z'\beta + \varepsilon,
	\] 
	where $Z$ is a $p$-dimensional vector of regressors, and $\varepsilon$ is an error term satisfying $\EC{\varepsilon}{Z} = 0$. Suppose that forecast $j \in \{A,B\}$ is based on some estimator for $\beta$, obtained from training data $\{Y_i, Z_i\}_{i=1}^n$. We seek to make predictions for a new observation 
	$Y_0 = Z_0'\beta + \varepsilon_0$, where $Z_0$ and $\varepsilon_0$ are independent of the training data. We have that 
	$X_j = Z_0'\hat \beta_j^n = Z_0'\beta + Z_0'~(\hat \beta_j^n - \beta)$, where $\hat \beta_j^n$ is the estimator underlying forecast $j$, and $\eta_j=Z_0'~(\hat \beta_j^n - \beta)$ represents the approximation error of forecast $j$. Setting $\mathcal{F} = \sigma(Z_0)$, we can apply Theorem \ref{thm:sign-condition}. By assumption, $\hat \beta_j^n - \beta$ (which is generated from training data) is independent of $\varepsilon_0,$ such that $\eta_j$ is conditionally independent of $\varepsilon_0$ given $\mathcal{F}$. For large training samples, it is natural to assume multivariate normality of $\hat \beta_j^n - \beta$ for $j \in \{A,B\}$ with mean zero and covariance matrix $\Sigma_j$. Under this assumption, dominance of $A$ over $B$ occurs if $a' \Sigma_A a \le a'\Sigma_B a$, for all $a \in \mathbb{R}^k$, which is equivalent to $(\Sigma_B - \Sigma_A)$ being positive semi-definite. This is the standard notion of $A$ being a more precise estimator of $\beta$ \citep[Equation 4.4]{LehmannCasella1998}.
\end{exmp}

}

\section{Data Examples}
\label{sec:data}

\subsection{Forecasting the volatility of financial asset returns}
\label{sec:data-vola}

Following \cite{AndersenEtAl2003}, a large literature is concerned with modeling and forecasting realized measures of asset return volatility. Here we consider forecasting $\log \text{RK}_t$, where $\text{RK}_t$ is a realized kernel estimate \citep{BarndorffEtAl2008} for the Dow Jones Industrial Average on day $t$. The two forecast specifications we compare are of the form
$$\widehat{\log \text{RK}_t} = \hat \beta_0 + \hat \beta_1 Z_{t-1} + \hat \beta_2 \sum_{l = 1}^5 Z_{t-l} + \hat \beta_{3} \sum_{l=1}^{22} Z_{t-l},$$
where $\{Z_t\}_t$ is a sequence of predictor variables. This functional form follows \cite{Corsi2009}, and provides a simple way of capturing the temporal persistence in $\log \text{RK}_t$ that is typical of financial volatilities. For forecast $A$, $Z_t$ corresponds to the daily logarithmic value of the VIX index, an implied volatility index computed from financial options. For forecast $B$, $Z_t$ corresponds to the logarithmic value of the absolute index return on day $t$. We estimate both specifications using ordinary least squares, based on a rolling window of $1000$ observations. Data on the realized kernel measure and daily returns are from the Oxford-Man Realized library at \url{https://realized.oxford-man.ox.ac.uk/}; data on the VIX are from the FRED database of the Federal Reserve Bank of St. Louis (\url{https://fred.stlouisfed.org/series/VIXCLS}). The sample obtained from merging both data sources covers daily observations from January 4, 2000 to May 10, 2018. The initial part of the sample is reserved for estimating the model. We evaluate forecasts for an out-of-sample period ranging from February 13, 2004 to May 10, 2018 ($3580$ observations). 

To illustrate the conditions for Theorem \ref{thm:1} empirically, we first consider MZ regressions for both forecasts, based on the out-of-sample period. For forecast $A$ (based on VIX), we obtain the estimate
\begin{center}
	\begin{tabular}{ccccc}
		$Y_t = $ & $0.029 $ & $+$ & $1.010$ & $X_{tA}$ + error; \\
		& $[0.030]$ & 		& $[0.022]$ & \\
	\end{tabular}
\end{center}
the $R^2$ of the regression is $64 \%$, and standard errors that are robust to autocorrelation and heteroscedasticity are reported in brackets. The standard errors are computed using the function \textsf{NeweyWest} from the \textsf{R} package \textsf{sandwich} \citep{Zeileis2004}, which implements the \cite{NeweyWest1987, NeweyWest1994} variance estimator. For forecast $B$ (based on absolute returns), we obtain 
\begin{center}
	\begin{tabular}{ccccc}
		$Y_t = $ & $0.015$ & $+$ & $1.003$ & $X_{tB}$ + error, \\
		& $[0.051]$ & 		& $[0.046]$ & \\
	\end{tabular}
\end{center}
with an $R^2$ of $48.2 \%$. In both regressions, a Wald test of the hypothesis of auto-calibration (corresponding to an intercept of zero and a slope of one) cannot be rejected at conventional significance levels.

To assess the convex order condition empirically, let $F_j$ denote the CDF of forecast $j \in \{A, B\}$. Then $A$ is greater than $B$ in convex order if and only if
\begin{equation}
\int_{-\infty}^{x} F_{A}(z)~dz - \int_{-\infty}^{x} F_{B}(z)~dz \ge 0\label{eq:intco}
\end{equation}
for every $x \in \mathbb{R}$, and equality holds in the limit as $x \rightarrow \infty$ (see the proof of Theorem \ref{thm:1} in Appendix \ref{app:b}). Figure \ref{fig:cdfs_vola} plots the empirical CDFs of both forecasts. Visual inspection suggests that the integral condition in Equation \ref{eq:intco} is plausible in the current example. {In order to provide a more formal assessment, we use the subsampling based test by \cite{LintonEtAl2005} to investigate the hypothesis that one distribution is smaller than another in convex order. ({\citet{LintonEtAl2005} test for second order stochastic dominance (SOSD). Under the assumption of auto-calibration, both forecasts have the same expected value, so that SOSD and convex order coincide, except for a differential sign convention.})  We abbreviate the hypothesis of interest as `$A$ is CO-smaller than $B$' in the following discussion. Since the test depends on a tuning parameter (the size $b$ of the subsamples) that is hard to select in practice, \citet[Section 5.2]{LintonEtAl2005} suggest to plot the test's $p$-value against $b$, and select $b$ from within a range over which $p$ is stable; see Online Appendix B.1 for details. Figure \ref{fig:subsamp} shows the test results. The hypothesis that $A$ is CO-smaller than $B$ is rejected at the five percent levels for a range of $b \ge 2000$ over which the $p$-values are stable. By contrast, the right panel of Figure \ref{fig:subsamp} shows no evidence against the hypothesis that $B$ is CO-smaller than $A$, with large $p$-values for all values of $b$. In summary, the test thus reinforces the impression that a convex ordering (with $A$ being greater than $B$) is plausible in the present example.}

\begin{figure}

\begin{tabular}{cc}
 $A$ smaller than $B$ in convex order &  $B$ smaller than $A$ in convex order \\
\includegraphics[width = .5\textwidth]{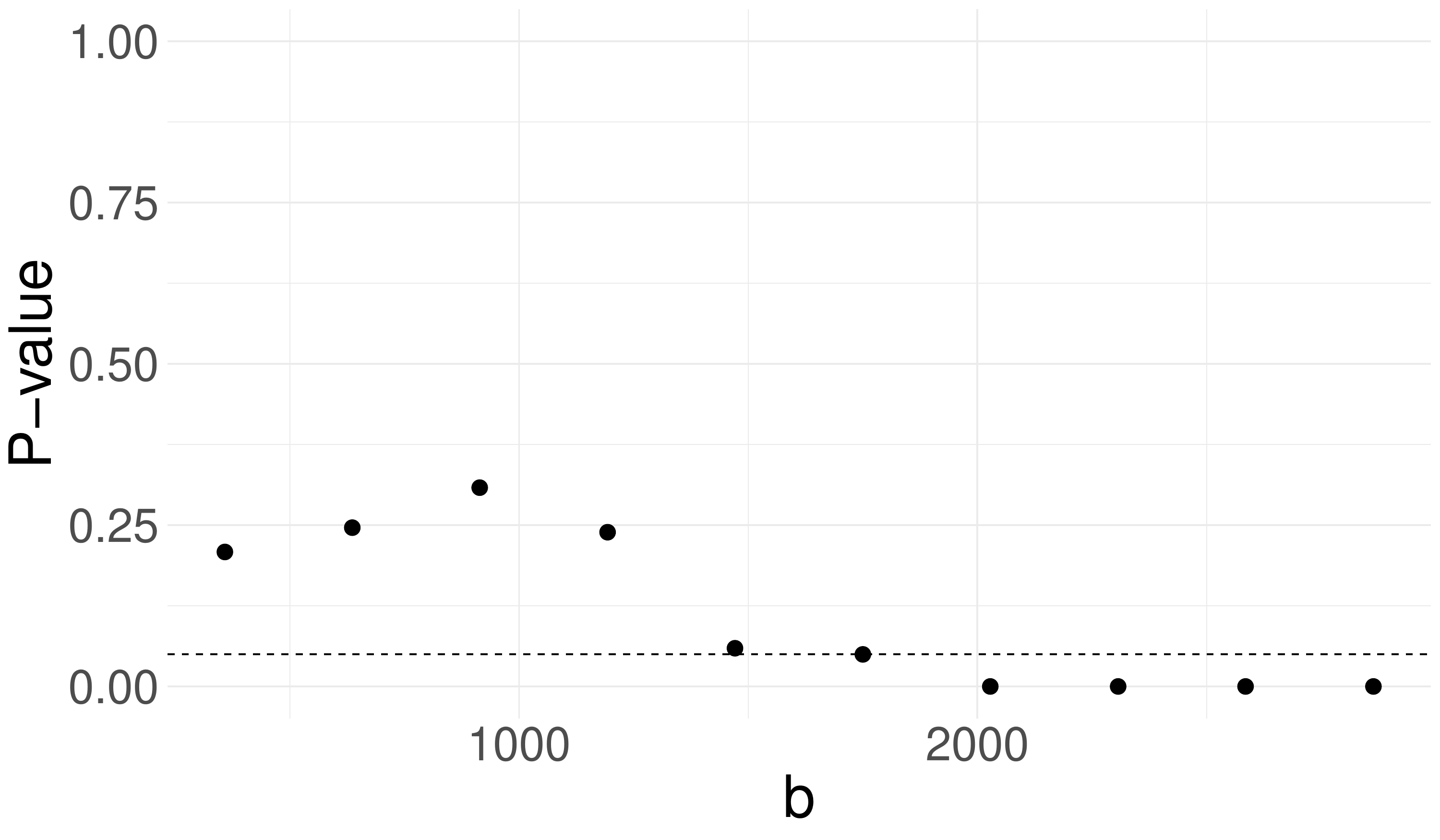} & 
\includegraphics[width = .5\textwidth]{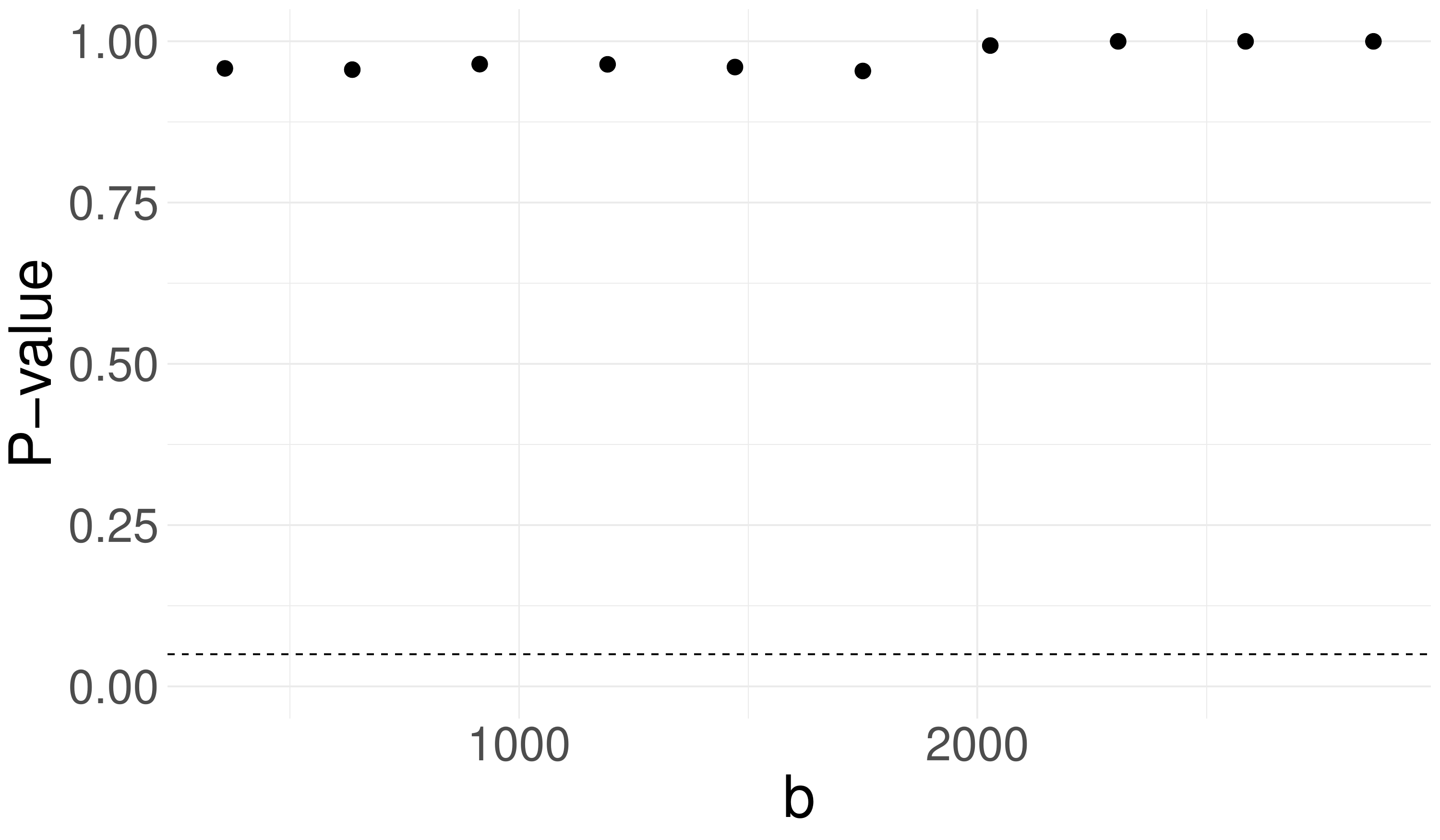} \\ 
\end{tabular}
\caption{Subsampling based $p$-values of the test by \cite{LintonEtAl2005} plotted against the subsample size parameter $b$. The dashed horizontal line marks a $p$-value of five percent. \label{fig:subsamp}}
\end{figure}

Hence both conditions of Theorem \ref{thm:1} seem plausible, and forecast $A$ appears to be more informative than forecast $B$. {Thus, we expect $A$ to dominate $B$. In order to test dominance empirically, we use the bootstrap-based test by \cite{ZiegelEtAl2017} which we modify to cover the class of Bregman scoring functions at (\ref{csf}), instead of the class of scoring functions related to Expected Shortfall that is used by \cite{ZiegelEtAl2017}. Following their implementation, we use a stationary bootstrap with block length drawn from a geometric distribution with mean $1.36~n^{-1/3},$ where $n$ is the size of the forecast evaluation sample. We use $10,000$ bootstrap iterations; see Online Appendix B.2 for further details. In line with the implication of Theorem \ref{thm:1}, the hypothesis that $A$ dominates $B$ is not rejected by the test, with a bootstrap $p$-value of one. In contrast, the hypothesis that $B$ dominates $A$ is rejected with a bootstrap $p$-value below one percent.} 

\begin{figure}[!htbp]
	\begin{center}
		\includegraphics[width = \textwidth]{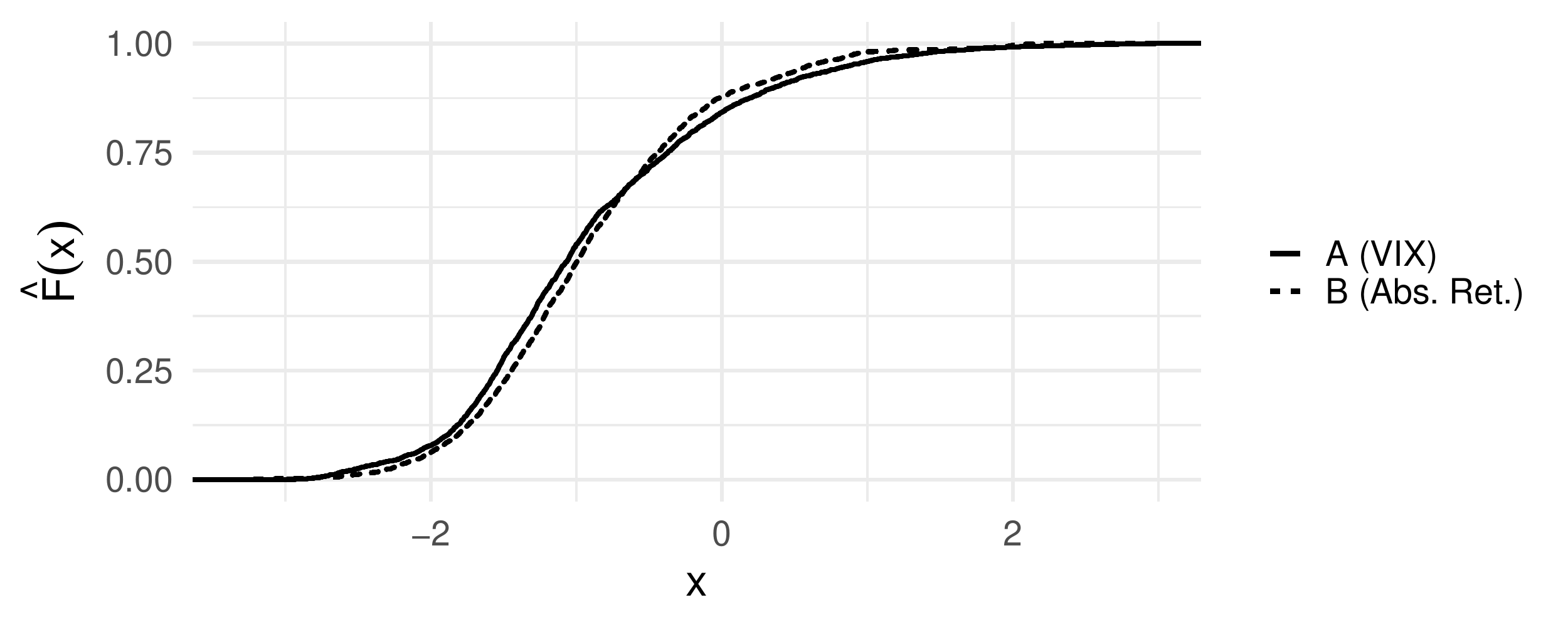}
		\caption{Volatility example: Empirical CDFs of both forecasts. \label{fig:cdfs_vola}}
	\end{center}
\end{figure}

\subsection{Forecasting US inflation}

We illustrate the results of the normally distributed case from Section \ref{sec:gauss} with inflation forecasts from the Survey of Professional Forecasters (SPF), a widely used survey of macroeconomic experts. We compare the survey against two simple forecasting schemes: A random walk forecast (RW) that states the latest realization available to SPF participants, and a rolling mean forecast (RM) considering the four latest available observations \citep{AtkesonOhanian2001}. Given their simplicity, these methods act as minimal benchmarks for more sophisticated competitors, and are routinely included in practical forecast comparisons \citep[see e.g.][Section 2.5]{FaustWright2013}. Our analysis is based on real-time data published by the Federal Reserve Bank of Philadelphia at \url{https://www.philadelphiafed.org/research-and-data/real-time-center}. We focus on inflation as measured by the GDP deflator; the relevant series codes are PGDP (SPF forecasts) and P (realizations). We compare the forecasts against the second vintages of the realizations data. We further center the forecasts and realizations at zero in order to enforce the common mean assumption made in Section \ref{sec:gauss} ($\mu_Y= \mu_A = \mu_B$); however, our results are very similar if we omit this centering step.

We first assess the assumption that forecasts $X_{tj}$ and realizations $Y_t$ follow a bivariate normal distribution. To this end, we implement the test by \cite{LobatoVelasco2004} for the null hypothesis that a univariate stationary time series is unconditionally Gaussian. The test is appealing in that it is free of tuning parameters. We apply the test to the forecasts $X_{tj}$, the outcome $Y_t$ and the forecast errors $Y_t-X_{tj}$, all of which are normally distributed if $X_{tj}$ and $Y_t$ are jointly normal. Repeating this procedure for three different forecast methods $j$ (SPF, random walk and rolling mean) and at five forecast horizons (ranging from zero to four quarters ahead), we obtain $p$-values above $20\%$ in all but one case. These results indicate that there is little evidence against pairwise bivariate normality of forecasts and realizations. Analogous tests for other macroeconomic variables (GDP growth and consumer price inflation) yielded clear rejections of normality, which is why we do not consider these variables here.

As a simple summary measure of forecast performance, Table \ref{tab:norminf_corr} presents the methods' mean squared error (MSE) at various forecast horizons. The SPF attains the smallest MSE among the three methods, with the rolling mean method performing similarly well at some horizons. The random walk method attains the largest MSE at all horizons. In order to assess the plausibility of various dominance scenarios (see below Proposition \ref{prop:norm}),  Table \ref{tab:norminf_corr} presents some relevant statistics related to the covariance matrix of $(X_{tj},Y_t)'$. We check whether these statistics match any of the scenarios under which dominance may occur. Consider, for example, the comparison of SPF versus RW at horizon $h = 0$ in the first column of Table \ref{tab:norminf_corr}. The SPF forecasts have a smaller empirical standard deviation than the random walk forecasts ($\sigma_{SPF} = 0.916 <  1.156 = \sigma_{RW}$). {At the same time, the SPF's MZ regression coefficient ($\beta_{SPF} = 0.903$) exceeds that of the random walk ($\beta_{RW} = 0.471$).} These findings indicate that the SPF forecasts have a better signal-to-noise ratio than the random walk. Indeed, the point estimates satisfy the conditions of Case 2a in Section \ref{sec:gauss}, with the SPF taking the role of the dominant forecast $A$.

The left panel of Table \ref{tab:compare} summarizes the outcomes of similar comparisons for all forecast horizons $h$. {This analysis is based on the empirical point estimates, and can hence be thought of as calibrating the theoretical results of Section \ref{sec:gauss} to empirical data.} The table reports a `$\checkmark$' entry whenever the parameters in Table \ref{tab:norminf_corr} belong to one of the sufficient conditions for dominance presented in Section \ref{sec:gauss} (Case 1-4). The SPF forecasts are dominant in six instances, all of which satisfy the conditions of Case 2a. These findings hence indicate that the SPF forecasts tend to contain less noise and more signal than the simple time series methods. Furthermore, 
according to the parameter estimates, the RM forecast dominates the RW forecast at the three shortest horizons, with the parameters again belonging to Case 2a in each case.

{The right panel of Table \ref{tab:compare} reports bootstrap $p$-values for various possible dominance relations. The bootstrap implementation is analogous to the one in Section \ref{sec:data-vola}. The bootstrap is nonparametric, contrasting the Gaussian setup of the theory in Section \ref{sec:gauss}. In comparing the left and right panels of Table \ref{tab:compare}, one can see a fairly close correspondence between the theoretical implications and the empirical test results. In particular, instances where theory predicts dominance (symbol $\checkmark$ in left panel) correspond to high bootstrap $p$-values in the right panel, such that there is no evidence against dominance. Cases where theory rules out dominance (symbol \texttt{X} in left panel) tend to go along with low bootstrap $p$-values in the right panel, corresponding to evidence against dominance.
	
The preceding analysis shows that our theoretical results under normality can inform empirical forecast comparisons. In addition, the comparisons between the two simple time series methods (RW and RM) are also in line with the theoretical conditions of Theorem \ref{thm:sign-condition}: First, both methods are based on the same information set generated by observations up until time $t$. Second, the theorem's testable implications in Proposition \ref{prop:testable} all seem plausible here; compare the coefficients $\beta_j$, $\sigma_j$ and $\mathbb{E}(X_j^4)$ reported in Table 2. The theorem then predicts dominance of RM over RW. As shown in Table 3, this conclusion is broadly in line with empirical nonparametric bootstrap tests.}

\begin{table}[!htbp]
\centering
{
\begin{tabular}{lrrrrr}
	\toprule
$h$	& $0$ & $1$ & $2$ & $3$ & $4$ \\
	\midrule
	MSE$_{SPF}$ & 0.665 & 0.778 & 0.862 & 0.920 & 1.001\\
	MSE$_{RW}$ & 1.412 & 1.535 & 1.497 & 1.348 & 1.538\\
	MSE$_{RM}$ & 0.886 & 0.917 & 0.991 & 1.103 & 1.201\\ \midrule
	$\sigma_Y$ & 1.160 & 1.160 & 1.160 & 1.160 & 1.160\\
	$\sigma_{SPF}$ & 0.916 & 0.917 & 0.967 & 1.008 & 1.012\\
	$\sigma_{RW}$ & 1.156 & 1.161 & 1.176 & 1.213 & 1.221\\
	$\sigma_{RM}$ & 0.924 & 0.935 & 0.950 & 0.971 & 0.987\\ [.2cm]
$\beta_{SPF}$ & 0.903 & 0.834 & 0.755 & 0.706 & 0.665\\
	$\beta_{RW}$ & 0.471 & 0.425 & 0.441 & 0.496 & 0.432\\
	$\beta_{RM}$ & 0.766 & 0.741 & 0.692 & 0.624 & 0.570\\ \midrule
	$\mathbb{E}(X_{SPF}^4)$ &  2.088 & 2.220 & 2.488 & 3.172 & 2.813\\
	$\mathbb{E}(X_{RW}^4)$ & 5.393 & 5.383 & 5.515 & 6.506 & 6.534\\
	$\mathbb{E}(X_{RM}^4)$ & 1.895 & 1.932 & 2.028 & 2.234 & 2.354\\
	\bottomrule
\end{tabular}
\caption{Sample estimates for the US inflation data. $h$ indicates the forecast horizon (in quarters); the sample period is 1984:Q1 to 2018:Q2. For forecast method $j \in \{SPF, RW, RM\}$, MSE$_j$ denotes the mean squared error, $\sigma_j$ denotes the standard deviation, {$\beta_{j}$ denotes the slope coefficient from a regression of realized inflation on the forecast, and $\mathbb{E}(X_j^4)$ is the fourth moment of the forecast. $\sigma_Y$ is the standard deviation of the realized inflation rates.}
\label{tab:norminf_corr}}} \end{table}

\begin{table}[!htbp]
\centering
{
\begin{tabular}{ccccccccccccc}
&& \multicolumn{5}{c}{Theory implications } &  & \multicolumn{5}{c}{Bootstrap $p$-values} \\
$h$ & & 0 & 1 & 2 & 3 & 4 & & 0 & 1 & 2 & 3 & 4 \\ \toprule
SPF $\succ_{fd}^?$ RW  & &$\checkmark$ & $\checkmark$ & $\checkmark$ & ? & $\checkmark$ & & 1.000 & 1.000 & 1.000 & 0.968 & 0.798\\
 RW $\succ_{fd}^?$ SPF && \texttt{X} & \texttt{X} & \texttt{X} & \texttt{X} & \texttt{X} & & 0.031 & 0.043 & 0.050 & 0.030 & 0.056\\
SPF $\succ_{fd}^?$ RM && $\checkmark$ & $\checkmark$ & ? & ? & ? & &0.911 & 0.650 & 0.723 & 0.441 & 0.736\\
RM $\succ_{fd}^?$ SPF && \texttt{X}& \texttt{X} & \texttt{X} & \texttt{X} & \texttt{X} & &0.435 & 0.255 & 0.160 & 0.291 & 0.225\\
RW $\succ_{fd}^?$ RM && \texttt{X}  & \texttt{X}& \texttt{X} & \texttt{X} & \texttt{X} & & 0.032 & 0.014 & 0.027 & 0.473 & 0.298\\
RM $\succ_{fd}^?$ RW && $\checkmark$  & $\checkmark$& $\checkmark$ & ? & ? & &  1.000 & 1.000 & 0.999 & 0.760 & 0.919\\\bottomrule
\end{tabular}
\caption{The notation `A $\succ_{fd}^?$ B' denotes the possibility that A dominates B. $h$ indicates the forecast horizon. Left panel: $\checkmark$ means that one of the sufficient conditions for dominance is satisfied. \texttt{X} means that the necessary condition is not satisfied. ? means that the necessary condition (but none of the sufficient conditions) is satisfied. Right panel: Bootstrap $p$-values of nonparametric forecast dominance test. \label{tab:compare}}
}
\end{table}

\section{Discussion}\label{sec:disc}

\cite{Patton2017} identifies three reasons why forecast dominance may not hold in practice: Non-nested information sets, misspecification, and estimation error. Motivated by this assessment, the present paper provides a theoretical analysis of forecast dominance that relates to each of these situations. Under the assumption that forecasts are auto-calibrated, our results in Section \ref{sec:auto} provide a novel characterization of the role played by information sets that may or may not be nested. Misspecification and estimation error are likely to lead to uncalibrated forecasts for which no analytical results are available in the existing literature on forecast dominance. Our results in Sections \ref{sec:gauss} and \ref{sec:model} cover this case in detail, based on two distinct sets of assumptions that allow us to arrive at interpretable conditions.\\ 

Conceptually, our results indicate that the notion of forecast dominance may be less strong than suggested by \cite{Patton2017}, \citet[Section 2.3]{NoldeZiegel2017}, and others. In particular, there can be dominance relations among two forecasts that are both highly imperfect. From a more technical perspective, an interesting question is whether similar conditions for forecast dominance can be derived for functionals other than the mean. As starting points of the analysis, our Theorem \ref{thm:expectiles} specifies conditions for dominance for the expectile functional (which includes the mean as a special case), and we treat quantiles in Online Appendix C. An open challenge are full distributional forecasts. 

\bibliographystyle{apalike}
\bibliography{fd}

\begin{thebibliography}{}

\bibitem[Andersen et~al., 2003]{AndersenEtAl2003}
Andersen, T.~G., Bollerslev, T., Diebold, F.~X., and Labys, P. (2003).
\newblock Modeling and forecasting realized volatility.
\newblock {\em Econometrica}, 71:579--625.

\bibitem[Atkeson and Ohanian, 2001]{AtkesonOhanian2001}
Atkeson, A. and Ohanian, L.~E. (2001).
\newblock Are {Phillips} curves useful for forecasting inflation?
\newblock {\em Federal Reserve Bank of Minneapolis Quarterly Review}, 25:2--11.

\bibitem[Barendse and Patton, 2019]{BarendsePatton2019}
Barendse, S. and Patton, A. (2019).
\newblock Comparing predictive accuracy in the presence of a loss function
  shape parameter.
\newblock Working Paper, Duke University, November 2019.

\bibitem[Barndorff-Nielsen et~al., 2008]{BarndorffEtAl2008}
Barndorff-Nielsen, O.~E., Hansen, P.~R., Lunde, A., and Shephard, N. (2008).
\newblock Designing realized kernels to measure the ex post variation of equity
  prices in the presence of noise.
\newblock {\em Econometrica}, 76:1481--1536.

\bibitem[Brier, 1950]{Brier1950}
Brier, G.~W. (1950).
\newblock Verification of forecasts expressed in terms of probability.
\newblock {\em Monthly Weather Review}, 78:1--3.

\bibitem[Buja et~al., 2005]{BujaEtAl2005}
Buja, A., Stuetzle, W., and Shen, Y. (2005).
\newblock Loss functions for binary class probability estimation and
  classification: Structure and applications.
\newblock Working Paper, University of Washington, November 2005.

\bibitem[Corsi, 2009]{Corsi2009}
Corsi, F. (2009).
\newblock A simple approximate long-memory model of realized volatility.
\newblock {\em Journal of Financial Econometrics}, 7:174--196.

\bibitem[DeGroot and Fienberg, 1983]{DegrootFienberg1983}
DeGroot, M.~H. and Fienberg, S.~E. (1983).
\newblock The comparison and evaluation of forecasters.
\newblock {\em The Statistician}, 32:12--22.

\bibitem[Ehm et~al., 2016]{Ehm2016}
Ehm, W., Gneiting, T., Jordan, A., and Kr{\"u}ger, F. (2016).
\newblock Of quantiles and expectiles: Consistent scoring functions, {C}hoquet
  representations and forecast rankings (with discussion and rejoinder).
\newblock {\em Journal of the Royal Statistical Society: Series B},
  78:505--562.

\bibitem[Ehm and Kr{\"u}ger, 2018]{EhmKruger2017}
Ehm, W. and Kr{\"u}ger, F. (2018).
\newblock Forecast dominance testing via sign randomization.
\newblock {\em Electronic Journal of Statistics}, 12:3758--3793.

\bibitem[{European Central Bank}, 2018]{ECB2018}
{European Central Bank} (2018).
\newblock {ECB} survey of professional forecasters (documentation).
\newblock Available at \url{
  https://www.ecb.europa.eu/stats/ecb_surveys/survey_of_professional_forecasters/html/index.en.html},
  accessed: September 17, 2018.

\bibitem[Faust and Wright, 2013]{FaustWright2013}
Faust, J. and Wright, J.~H. (2013).
\newblock Forecasting inflation.
\newblock In {\em Handbook of Economic Forecasting}, volume~2, pages 2--56.
  Elsevier, Amsterdam.

\bibitem[Gneiting, 2011]{Gneiting2011}
Gneiting, T. (2011).
\newblock Making and evaluating point forecasts.
\newblock {\em Journal of the American Statistical Association}, 106:746--762.

\bibitem[Holzmann and Eulert, 2014]{Holzmann2014}
Holzmann, H. and Eulert, M. (2014).
\newblock The role of the information set for forecasting--with applications to
  risk management.
\newblock {\em Annals of Applied Statistics}, 8:595--621.

\bibitem[Krzysztofowicz and Long, 1990]{KrzysztofowiczLong1990}
Krzysztofowicz, R. and Long, D. (1990).
\newblock Fusion of detection probabilities and comparison of multisensor
  systems.
\newblock {\em IEEE Transactions on Systems, Man, and Cybernetics},
  20:665--677.

\bibitem[Lehmann and Casella, 1998]{LehmannCasella1998}
Lehmann, E.~L. and Casella, G. (1998).
\newblock {\em Theory of Point Estimation}.
\newblock Springer, New York, 2 edition.

\bibitem[Levy, 2016]{Levy2016}
Levy, H. (2016).
\newblock {\em Stochastic Dominance: Investment Decision Making Under
  Uncertainty}.
\newblock Springer, New York, 3 edition.

\bibitem[Linton et~al., 2005]{LintonEtAl2005}
Linton, O., Maasoumi, E., and Whang, Y.-J. (2005).
\newblock Consistent testing for stochastic dominance under general sampling
  schemes.
\newblock {\em Review of Economic Studies}, 72:735--765.

\bibitem[Lobato and Velasco, 2004]{LobatoVelasco2004}
Lobato, I.~N. and Velasco, C. (2004).
\newblock A simple test of normality for time series.
\newblock {\em Econometric Theory}, 20:671--689.

\bibitem[Machina and Pratt, 1997]{PrattMachina1997}
Machina, M. and Pratt, J. (1997).
\newblock Increasing risk: Some direct constructions.
\newblock {\em Journal of Risk and Uncertainty}, 14:103--127.

\bibitem[Mincer and Zarnowitz, 1969]{Mincer1969}
Mincer, J.~A. and Zarnowitz, V. (1969).
\newblock The evaluation of economic forecasts.
\newblock In Mincer, J.~A., editor, {\em Economic Forecasts and Expectations:
  Analysis of Forecasting Behavior and Performance}, pages 3--46. Columbia
  University Press, New York.

\bibitem[M\"{u}ller and R\"uschendorf, 2001]{MullerRuschendo2001}
M\"{u}ller, A. and R\"uschendorf, L. (2001).
\newblock On the optimal stopping values induced by general dependence
  structures.
\newblock {\em Journal of Applied Probability}, 38:672--684.

\bibitem[Newey and Powell, 1987]{NeweyPowell1987}
Newey, W.~K. and Powell, J.~L. (1987).
\newblock Asymmetric least squares estimation and testing.
\newblock {\em Econometrica}, 55:819--847.

\bibitem[Newey and West, 1987]{NeweyWest1987}
Newey, W.~K. and West, K.~D. (1987).
\newblock A simple, positive semi-definite, heteroscedasticity and
  autocorrelation consistent covariance matrix.
\newblock {\em Econometrica}, 55:703--708.

\bibitem[Newey and West, 1994]{NeweyWest1994}
Newey, W.~K. and West, K.~D. (1994).
\newblock Automatic lag selection in covariance matrix estimation.
\newblock {\em Review of Economic Studies}, 61:631--653.

\bibitem[Nolde and Ziegel, 2017]{NoldeZiegel2017}
Nolde, N. and Ziegel, J.~F. (2017).
\newblock Elicitability and backtesting: {P}erspectives for banking regulation
  (with discussion and rejoinder).
\newblock {\em Annals of Applied Statistics}, 11:1833--1874.

\bibitem[Patton, 2011]{Patton2011}
Patton, A.~J. (2011).
\newblock Volatility forecast comparison using imperfect volatility proxies.
\newblock {\em Journal of Econometrics}, 160:246--256.

\bibitem[Patton, 2018]{Patton2017}
Patton, A.~J. (2018).
\newblock Comparing possibly misspecified forecasts.
\newblock {\em Journal of Business \& Economic Statistics}.
\newblock Forthcoming.

\bibitem[Patton and Timmermann, 2012]{PattonTimmermann2012}
Patton, A.~J. and Timmermann, A. (2012).
\newblock Forecast rationality tests based on multi-horizon bounds.
\newblock {\em Journal of Business \& Economic Statistics}, 30:1--17.

\bibitem[Ranjan and Gneiting, 2010]{GneitingRanjan2010}
Ranjan, R. and Gneiting, T. (2010).
\newblock Combining probability forecasts.
\newblock {\em Journal of the Royal Statistical Society. Series B}, 72:71--91.

\bibitem[Rothschild and Stiglitz, 1970]{RothschildStieglitz1970}
Rothschild, M. and Stiglitz, J.~E. (1970).
\newblock Increasing risk: I. {A} definition.
\newblock {\em Journal of Economic Theory}, 2:225 -- 243.

\bibitem[Satop{\"a}{\"a} et~al., 2016]{Satopaa2016}
Satop{\"a}{\"a}, V.~A., Pemantle, R., and Ungar, L.~H. (2016).
\newblock Modeling probability forecasts via information diversity.
\newblock {\em Journal of the American Statistical Association},
  111:1623--1633.

\bibitem[Savage, 1971]{Savage1971}
Savage, L.~J. (1971).
\newblock Elicitation of personal probabilities and expectations.
\newblock {\em Journal of the American Statistical Association}, 66:783--801.

\bibitem[Shaked and Shanthikumar, 2007]{ShakedShanthiku2007}
Shaked, M. and Shanthikumar, J.~G. (2007).
\newblock {\em Stochastic Orders}.
\newblock Springer, New York.

\bibitem[Str{\"a}hl and Ziegel, 2017]{StraehlZiegel2017}
Str{\"a}hl, C. and Ziegel, J.~F. (2017).
\newblock Cross-calibration of probabilistic forecasts.
\newblock {\em Electronic Journal of Statistics}, 11:608--639.

\bibitem[Strassen, 1965]{Strassen1965}
Strassen, V. (1965).
\newblock The existence of probability measures with given marginals.
\newblock {\em Annals of Mathematical Statistics}, 36:423--439.

\bibitem[Yen and Yen, 2018]{YenYen2018}
Yen, T.-J. and Yen, Y.-M. (2018).
\newblock Testing forecast accuracy of expectiles and quantiles with the
  extremal consistent loss functions.
\newblock Working Paper, National Chengchi University, July 2018.

\bibitem[Zeileis, 2004]{Zeileis2004}
Zeileis, A. (2004).
\newblock Econometric computing with {HC} and {HAC} covariance matrix
  estimators.
\newblock {\em Journal of Statistical Software}, 11:1--17.

\bibitem[Ziegel et~al., 2018]{ZiegelEtAl2017}
Ziegel, J.~F., Kr\"uger, F., Jordan, A., and Fasciati, F. (2018).
\newblock Robust forecast evaluation of {E}xpected {S}hortfall.
\newblock {\em Journal of Financial Econometrics}.
\newblock Forthcoming.

\end{thebibliography}

\appendix

\section*{Appendix}

\section{Result for Dominance of Expectile Forecasts} \label{app:A}

We state and prove a more general version of Theorem \ref{thm:mean}. We consider the expectile functional of $Y$ at level $\tau \in (0,1)$ \citep{NeweyPowell1987}. The expectile is the unique value $t$ that satisfies
$$(1-\tau) \int_{(-\infty,t]} (t-y)\diff F(y) = \tau \int_{[t,\infty)}(y-t) \diff F(y),$$
where $F(y)$ is the CDF of $Y$. The mean functional is obtained as a special case for $\tau = 1/2$. {As shown by \cite{Gneiting2011}, the class of consistent scoring functions for the expectile at level $\tau$ is given by 
\begin{equation}
S(x,y) = |\ind_{(y < x)} - \tau|~\left(\phi(y) - \phi(x) - \phi'(x)~(y-x)\right),\label{csf_exp}
\end{equation}
where $\phi$ is a convex function with subgradient $\phi'$. The relevant class for the mean (see Equation \ref{csf}) emerges for $\tau = 1/2$. Analogously to Definition \ref{def:dom}, we then have the following definition of forecast dominance for expectiles.

\begin{defn}[Forecast dominance for expectiles]\label{def:dom_exp}
Forecast $A$ \emph{dominates} forecast $B$ if $$\E{S(X_A, Y)} \le \E{S(X_B, Y)}$$
for every function $S$ of the form given in (\ref{csf_exp}).
\end{defn}}

\begin{lem}\label{lem:A.1}
For any Borel set $A \subset \mathbb{R}$,
\begin{align*}
\E{(X-Y)_+\mathbf{1}_A(X)} & = \int_{-\infty}^{\infty} \prob(Y < w, X > w, X \in A)\diff w,\\
\E{(Y-X)_+\mathbf{1}_A(X)}  &= \int_{-\infty}^{\infty} \prob(Y \ge  w, X \le w, X \in A)\diff w.
\end{align*}
\end{lem}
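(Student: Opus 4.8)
The plan is to deduce both identities from the elementary ``layer-cake'' representation of the positive part, combined with Tonelli's theorem. First I would record the pointwise identity
\[
(X-Y)_+ = \int_{-\infty}^{\infty} \mathbf{1}_{(Y < w)}\,\mathbf{1}_{(X > w)}\diff w ,
\]
valid for every $\omega\in\Omega$: when $X(\omega)>Y(\omega)$ the integrand is the indicator of the interval with endpoints $Y(\omega)$ and $X(\omega)$, whose Lebesgue measure is $X(\omega)-Y(\omega)$, while when $X(\omega)\le Y(\omega)$ the set $\{w:Y(\omega)<w<X(\omega)\}$ is empty and both sides vanish. The values of $w$ equal to $X(\omega)$ or $Y(\omega)$ form a Lebesgue-null set and may be ignored. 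Multiplying through by $\mathbf{1}_A(X)$ and taking expectations gives
\[
\E{(X-Y)_+\mathbf{1}_A(X)} = \E{\int_{-\infty}^{\infty} \mathbf{1}_{(Y<w)}\mathbf{1}_{(X>w)}\mathbf{1}_A(X)\diff w}.
\]

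Next I would interchange the expectation with the $\diff w$ integral. The function $(w,\omega)\mapsto \mathbf{1}_{(Y(\omega)<w)}\mathbf{1}_{(X(\omega)>w)}\mathbf{1}_A(X(\omega))$ is nonnegative and jointly measurable on $\mathbb{R}\times\Omega$ (a finite product of indicators of measurable sets), so Tonelli's theorem applies and yields
\[
\E{(X-Y)_+\mathbf{1}_A(X)} = \int_{-\infty}^{\infty} \E{\mathbf{1}_{(Y<w)}\mathbf{1}_{(X>w)}\mathbf{1}_A(X)}\diff w = \int_{-\infty}^{\infty}\prob(Y<w,\,X>w,\,X\in A)\diff w,
\]
which is the first claimed identity. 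Integrability of $X$ and $Y$ makes $(X-Y)_+$ integrable and hence guarantees finiteness of both sides.

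For the second identity I would argue identically, starting instead from the pointwise representation
\[
(Y-X)_+ = \int_{-\infty}^{\infty}\mathbf{1}_{(X\le w)}\mathbf{1}_{(Y\ge w)}\diff w ,
\]
which holds because the integrand is the indicator of the interval $[X(\omega),Y(\omega)]$ when $X(\omega)<Y(\omega)$ and is null otherwise; multiplying by $\mathbf{1}_A(X)$ and applying Tonelli then produces $\int_{-\infty}^{\infty}\prob(Y\ge w,\,X\le w,\,X\in A)\diff w$. There is no substantive obstacle here beyond bookkeeping: the only point requiring a moment's care is the alignment of the open versus closed half-lines in the layer-cake representations with the strict and non-strict inequalities appearing in the statement, a distinction that is immaterial under Lebesgue measure but should be made consistently.
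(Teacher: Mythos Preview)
Your proof is correct and follows essentially the same approach as the paper: both represent $(X-Y)_+$ as an integral of indicators and then swap expectation with the $\diff w$-integral via Fubini/Tonelli. Your version is slightly more direct, working pointwise on $\Omega$ rather than first disintegrating into the conditional distribution $F(\cdot\mid X=x)$ and the marginal $G$ of $X$, but the mathematical content is identical.
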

{
\begin{proof}
By Fubini's theorem, we obtain
\begin{align*}
\E{(X-Y)_+\mathbf{1}_A(X)} & = \int_{\mathbb{R}}\int_{\mathbb{R}}(x-y)_+ \mathbf{1}_A(x) \diff F(y|X=x) \diff G(x)\\
&= \int_{\mathbb{R}}\mathbf{1}_A(x)  \int_{(-\infty,x]}(x-y)\diff F(y|X=x) \diff G(x)\\
&= \int_{\mathbb{R}}\mathbf{1}_A(x)\int_{(-\infty,x]}\int_y^x \diff w \diff F(y|X=x) \diff G(x)\\
&= \int_{\mathbb{R}}\int_{\mathbb{R}}\int_{-\infty}^{\infty}\mathbf{1}_{(-\infty, w)}(y)\mathbf{1}_{(w, \infty)}(x)\mathbf{1}_A(x) \diff w \diff F(y|X=x) \diff G(x)\\
&= \int_{-\infty}^\infty \int_{(w,\infty)}\mathbf{1}_A(x) \int_{(-\infty,w)} \diff F(y|X=x)   \diff G(x)\diff w \\
&= \int_{-\infty}^\infty \E{\mathbf{1}_{(w,\infty)}(X)\mathbf{1}_{(-\infty,w)}(Y)\mathbf{1}_A(X)} \diff w 
\end{align*}
where $F(\cdot|X=x)$ denotes the conditional CDF of $Y$ given $X=x$, and $G$ denotes the CDF of $X$. The proof of the second equality is analogous.
\end{proof}}

\begin{lem}\label{lem:A.2}
Let $X,Z$ be two random variables such that $\E{XZ}$ exists and is finite.
Then,
\begin{align}
\E{XZ} &= \int_0^\infty\int_0^\infty \big(H(x,z) - F(x) - G(z) + 1 \big)\diff x \diff z + \int_{-\infty}^0\int_{-\infty}^0 H(x,z) \diff x \diff z\nonumber\\
&\quad +\int_{-\infty}^0\int_{0}^\infty \big(H(x,z) - G(z)\big)\diff x \diff z  +\int_0^\infty\int_{-\infty}^0 \big(H(x,z) - F(x)\big)\diff x \diff z,\label{eq:lemA.2}
\end{align}
where $H(x,z) = \prob(X \le x,Z\le z)$, $F(x) = \prob(X \le x)$, $G(z) = \prob(Z\le z)$ are the joint and marginal CDFs of $(X,Z)$, $X$ and $Z$, respectively. 
\end{lem}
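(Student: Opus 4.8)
The plan is to prove Lemma \ref{lem:A.2} by splitting $X$ and $Z$ into positive and negative parts and applying a layer-cake (Fubini) representation of the product of two non-negative random variables on each of the four quadrants. Write $X = X_+ - X_-$ and $Z = Z_+ - Z_-$ with $X_\pm, Z_\pm \ge 0$; then $XZ = X_+Z_+ - X_+Z_- - X_-Z_+ + X_-Z_-$. Since $\E{XZ}$ is finite, i.e.\ $\E{|XZ|} < \infty$, each of the four products is integrable (each is dominated by $|XZ|$), so $\E{XZ}$ is the corresponding alternating sum of four finite expectations, which I will treat separately.

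The basic tool is the identity $\E{UV} = \int_0^\infty\int_0^\infty \prob(U > u, V > v)\diff u\diff v$ for non-negative $U,V$, obtained by writing $UV = \int_0^\infty\int_0^\infty \mathbf{1}_{(u<U)}\mathbf{1}_{(v<V)}\diff u\diff v$ and applying Tonelli's theorem. For the first quadrant I apply this to $U = X_+$, $V = Z_+$; using $\{X_+ > x\} = \{X > x\}$ and $\{Z_+ > z\} = \{Z > z\}$ for $x,z > 0$ together with the inclusion--exclusion identity $\prob(X > x, Z > z) = 1 - F(x) - G(z) + H(x,z)$, this yields exactly the first double integral in \eqref{eq:lemA.2}. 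For the other three quadrants I proceed in the same way and then substitute $x \mapsto -x$ and/or $z \mapsto -z$ to rewrite $\{X_- > x\} = \{X < -x\}$ as an event $\{X < x\}$ on the region $x<0$, and likewise for $Z$; each such substitution produces precisely the region of integration appearing in \eqref{eq:lemA.2}. Along the way I replace $\prob(X < x)$, $\prob(Z < z)$ and $\prob(X < x, Z < z)$ by $F(x)$, $G(z)$ and $H(x,z)$, which is legitimate for Lebesgue-almost every $(x,z)$ since each marginal distribution has at most countably many atoms. The two cross terms enter $XZ$ with a minus sign, so $G(z) - H(x,z)$ and $F(x) - H(x,z)$ become the integrands $H(x,z) - G(z)$ and $H(x,z) - F(x)$ displayed in the statement.

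The one point requiring genuine care, as opposed to bookkeeping, is integrability: one must not split $H(x,z) - F(x) - G(z) + 1 = \prob(X>x,Z>z)$ into four separate integrals over the first quadrant, since $\int_0^\infty\int_0^\infty 1\diff x\diff z = \infty$, and similarly $H(x,z) - G(z)$ and $H(x,z) - F(x)$ are not individually integrable over the mixed quadrants. These combinations must be kept intact throughout; their integrability is exactly the finiteness of $\E{X_+Z_+}$, $\E{X_+Z_-}$, $\E{X_-Z_+}$ and $\E{X_-Z_-}$, which we have from $\E{|XZ|} < \infty$. Assembling the four contributions then gives the right-hand side of \eqref{eq:lemA.2}.
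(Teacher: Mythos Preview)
Your proof is correct and follows essentially the same route as the paper: decompose $X$ and $Z$ into positive and negative parts, use the layer-cake representation $UV = \int_0^\infty\int_0^\infty \mathbf{1}_{(u<U)}\mathbf{1}_{(v<V)}\diff u\diff v$ on each quadrant, and apply Fubini/Tonelli to identify the four double integrals. The only cosmetic difference is that the paper groups the four products as $(XZ)_+ = X_+Z_+ + X_-Z_-$ and $(XZ)_- = X_+Z_- + X_-Z_+$ and uses the indicator $\mathbf{1}_{[Y,\infty)}(x)$ to obtain $\prob(Y\le x)$ directly, whereas you treat the four terms separately and handle the strict-versus-weak inequality via the almost-everywhere argument.
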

{{
\begin{proof}
For a random variable $Y$, we can write
\begin{align*}
Y_+ &=\int_0^\infty (1 - \mathbf{1}_{[Y,\infty)}(x)) \diff x,\quad Y_- = \int_{-\infty}^0 \mathbf{1}_{[Y,\infty)}(x) \diff x,
\end{align*}
where $Y_+ = \max\{Y,0\}$, $Y_-=\max\{-Y,0\}$ are the positive and the negative part of $Y$, respectively. Therefore,
\begin{multline}\label{eq:XZ}
(XZ)_+ = X_+ Z_+ + X_-Z_- = \int_0^\infty\int_0^\infty (1- \mathbf{1}_{[X,\infty)}(x))(1 - \mathbf{1}_{[Z,\infty)}(z)) \diff x \diff z \\+ \int_{-\infty}^0\int_{-\infty}^0 \mathbf{1}_{[X,\infty)}(x)\mathbf{1}_{[Z,\infty)}(z) \diff x \diff z.
\end{multline}
Taking the expectation in \eqref{eq:XZ} and using Fubini's theorem, we obtain
\begin{equation*}
\E{(XZ)_+} = \int_0^\infty\int_0^\infty H(x,z) - F(x) - G(z) + 1 \diff x \diff z + \int_{-\infty}^0\int_{-\infty}^0 H(x,z) \diff x \diff z,
\end{equation*}
and, similarly, with $(XZ)_- = X_+ Z_- + X_-Z_+$,
\begin{equation*}
\E{(XZ)_-} =  \int_{-\infty}^0\int_{0}^\infty G(z) - H(x,z)\diff x \diff z  +\int_0^\infty\int_{-\infty}^0 F(x) - H(x,z)\diff x \diff z. \qedhere
\end{equation*}
\end{proof}}}
{\begin{lem}\label{lem:A.3}
The elementary scoring function for expectiles in \citet[Equation 12]{Ehm2016} is identical to the function
\begin{equation}\label{eq:lemA.3}
S_\theta(x,y) = |\mathbf{1}_{(y < \theta)} - \tau|(\theta - y)\mathbf{1}_{(x > \theta)},
\end{equation}
up to a difference of $\tau~(y-\theta)_+$ which does not depend on $x$.
\end{lem}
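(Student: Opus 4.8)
The plan is to write down the elementary expectile score of \citet[Equation 12]{Ehm2016} in its explicit piecewise form and then verify the claimed identity by a short, essentially two-case computation. With $\theta \in \mathbb{R}$ the index and $\tau \in (0,1)$ the expectile level, the score of \citet{Ehm2016}, which I denote $S_\theta^{\mathrm{E}}$, is
\[
S_\theta^{\mathrm{E}}(x,y) = (1-\tau)\,|y-\theta|\,\mathbf{1}_{(y \le \theta < x)} + \tau\,|y-\theta|\,\mathbf{1}_{(x \le \theta < y)} .
\]
The goal is to show $S_\theta^{\mathrm{E}}(x,y) = S_\theta(x,y) + \tau\,(y-\theta)_+$, where $S_\theta$ is the function in \eqref{eq:lemA.3}; since the correction $\tau(y-\theta)_+$ does not depend on $x$, this is exactly the assertion of the lemma (and, at $\tau = 1/2$, it recovers the remark made just after \eqref{esmean} for the mean).

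For the verification I would fix $\theta$ and observe that the $x$-dependence of every term collapses to the single indicator $\mathbf{1}_{(x>\theta)}$ and its complement $\mathbf{1}_{(x\le\theta)}$, so it suffices to split according to the position of $y$ relative to $\theta$. First, if $y < \theta$, then $|\mathbf{1}_{(y<\theta)} - \tau| = 1-\tau$, so $S_\theta(x,y) = (1-\tau)(\theta-y)\,\mathbf{1}_{(x>\theta)}$ and $\tau(y-\theta)_+ = 0$; using $\theta - y = |y-\theta|$ and that the event $\{y \le \theta < x\}$ reduces to $\{x>\theta\}$ while $\{x \le \theta < y\}$ is empty, the sum $S_\theta(x,y) + \tau(y-\theta)_+$ equals $S_\theta^{\mathrm{E}}(x,y)$. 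Second, if $y \ge \theta$, then $|\mathbf{1}_{(y<\theta)} - \tau| = \tau$, so
\[
S_\theta(x,y) + \tau(y-\theta)_+ = \tau(\theta-y)\,\mathbf{1}_{(x>\theta)} + \tau(y-\theta) = \tau(y-\theta)\bigl(1 - \mathbf{1}_{(x>\theta)}\bigr) = \tau(y-\theta)\,\mathbf{1}_{(x\le\theta)} ,
\]
which, using $y-\theta = |y-\theta|$ and that $\{x\le\theta<y\}$ reduces to $\{x\le\theta\}$ while $\{y\le\theta<x\}$ is empty (here $y>\theta$; the boundary $y=\theta$ makes both $S_\theta^{\mathrm{E}}$ and the right-hand side vanish), again matches $S_\theta^{\mathrm{E}}(x,y)$. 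Combining the two cases gives the identity for all $(x,y)$.

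I do not anticipate a genuine obstacle: the statement is a routine reformulation. The only points requiring a little care are (i) transcribing the boundary conventions ($\le$ versus $<$) in the definition of \citet{Ehm2016} correctly — but any ambiguity there sits on the set $\{y=\theta\}$, where every relevant term carries the vanishing factor $\theta-y$, so it is immaterial — and (ii) getting the sign and direction of the correction term right, which the two-case computation above pins down. A sanity check at $\tau = 1/2$, where $|\mathbf{1}_{(y<\theta)}-\tfrac12| \equiv \tfrac12$, reduces the identity to $S_\theta^{\mathrm{E}}(x,y) = \tfrac12(\theta-y)\mathbf{1}_{(x>\theta)} + \tfrac12(y-\theta)_+$, consistent with \eqref{esmean}.
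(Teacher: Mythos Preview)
Your argument is correct and, at its core, proceeds the same way as the paper's: a case split on the position of $y$ relative to $\theta$. The one substantive difference is your starting point. The paper takes Equation (12) of \citet{Ehm2016} in its literal Bregman form,
\[
S_{\theta}^{\mathrm{E}}(x,y) = |\ind_{(y<x)}-\tau|\bigl\{(y-\theta)_+-(x-\theta)_+-(y-x)\ind_{(\theta<x)}\bigr\},
\]
and first does an algebraic reduction (using $|\ind_{(y<x)}-\tau| = \ind_{(y<x)}(1-2\tau)+\tau$ and collapsing the brace) to reach an expression equivalent to your piecewise form, before performing the $y\lessgtr\theta$ split. You instead write down the piecewise form directly as if it were Equation (12) itself. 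That form is correct and standard, but since the lemma is specifically about matching Equation (12), you should either cite where the piecewise version appears in \citet{Ehm2016} or insert the short four-case check that the Bregman expression above vanishes when $x,y$ are on the same side of $\theta$ and equals $(1-\tau)(\theta-y)$ or $\tau(y-\theta)$ otherwise. With that one line added, your route is a bit cleaner than the paper's, which carries the $\ind_{(y<x)}$ factor through an extra intermediate identity before it disappears.
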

\begin{proof}
Adjusting the notation in Equation (12) of Ehm et al. (2016) (using the symbol $\tau$ instead of $\alpha$ for the expectile level), we have
$$S_{\theta}(x,y) = |\ind_{(y<x)}-\tau|~\left\{(y-\theta)_+-(x-\theta)_+-(y-x)~\ind_{(\theta<x)}\right\}.$$
Since $|\ind_{(y<x)}-\tau| = \ind_{(y<x)}(1-2\tau)+\tau$ and $(z)_+ = z\ind_{(z>0)}$, the score can be rewritten as 
$$S_{\theta}(x,y) =\left[\ind_{(y<x)}(1-2\tau)+\tau\right](\theta-y)\left[\ind_{(x > \theta)} - \ind_{(y>\theta)}\right].$$
Subtracting the term $\tau(y-\theta)\ind_{(y>\theta)}$ (which does not depend on $x$) and rearranging, we get
\begin{align*}
S_{\theta}(x,y) &=\left[\ind_{(y<x)}(1-2\tau)\right](y-\theta)\left(\ind_{(y > \theta)} - \ind_{(x > \theta)}\right) + \tau(\theta-y)\ind_{(x > \theta)} \nonumber\\
&= \begin{cases} \left[\ind_{(y<x)}(1-2\tau)+\tau\right](\theta-y)\ind_{(x > \theta)} & y \le \theta\\
\left[\ind_{(y<x)}(1-2\tau)\right](y-\theta)\left(1 - \ind_{(x > \theta)}\right) + \tau(\theta-y)\ind_{(x > \theta)} & y > \theta \end{cases}\\
&= \begin{cases} (1-\tau)(\theta-y)\ind_{(x > \theta)} & y \le \theta\\
 \tau(\theta-y)\ind_{(x > \theta)} & y > \theta \end{cases} \\ 
 &= |\ind_{(y \le \theta)} - \tau| (\theta-y)\ind_{(x > \theta)}= |\ind_{(y < \theta)} - \tau| (\theta-y)\ind_{(x > \theta)}.\qedhere
\end{align*}	
\end{proof}}

\begin{thm}\label{thm:expectiles}
Let $A$ and $B$ be forecasts for the $\tau$-expectile. Then $A$ dominates $B$ if and only if $\psi_{A}(\theta) \ge \psi_B(\theta)$, for all $\theta \in \mathbb{R}$,
where 
\begin{multline*}
\psi_j(\theta) =\int_{\theta}^\infty \tau \prob(X_j > w, Y > w) + (1-\tau) \prob(X_j > w ,Y \le w)\diff w\\
 + \tau \E{(Y-X_j)_+\mathbf{1}_{(X_j > \theta)}} - (1-\tau) \E{(X_j-Y)_+ \mathbf{1}_{(X_j > \theta)}}, \quad \text{for $j \in \{A,B\}$.}
\end{multline*}
\end{thm}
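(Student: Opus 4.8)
The plan is to reduce the statement to the elementary-score characterization of \citet[Theorem 1b and Corollary 1b]{Ehm2016}, already quoted in the excerpt, and then show that the function $\psi_j(\theta)$ defined in the theorem equals $\E{-S_\theta(X_j,Y)}$ for the elementary expectile score $S_\theta$ of Lemma \ref{lem:A.3}. Once this identity is established, the equivalence ``$A$ dominates $B$ $\iff$ $\psi_A(\theta)\ge\psi_B(\theta)$ for all $\theta$'' is immediate: dominance in the sense of Definition \ref{def:dom_exp} holds iff $\E{S_\theta(X_A,Y)}\le\E{S_\theta(X_B,Y)}$ for every $\theta\in\mathbb{R}$ (the $x$-independent discrepancy $\tau(y-\theta)_+$ from Lemma \ref{lem:A.3} cancels in the difference), which rearranges to $\psi_A(\theta)\ge\psi_B(\theta)$.

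So the real content is the computation of $\E{S_\theta(X,Y)}$ for a single forecast $X$, with
\[
S_\theta(x,y) = |\mathbf{1}_{(y<\theta)}-\tau|(\theta-y)\mathbf{1}_{(x>\theta)} = \tau(\theta-y)\mathbf{1}_{(x>\theta)} + (1-2\tau)(\theta-y)\mathbf{1}_{(y<\theta)}\mathbf{1}_{(x>\theta)}.
\]
First I would split $(\theta-y)=(\theta-y)_+-(\theta-y)_-=(\theta-y)_+ - (y-\theta)_+$ and write $\mathbf{1}_{(y<\theta)}(\theta-y) = (\theta-y)_+$. This turns $-S_\theta(x,y)$ into a combination of terms of the shape $(y-\theta)_+\mathbf{1}_{(x>\theta)}$ and $(\theta-y)_+\mathbf{1}_{(x>\theta)}$. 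A short case analysis (or direct algebra with $|\mathbf{1}_{(y<\theta)}-\tau|$) collapses this to exactly
\[
-S_\theta(x,y) = \tau(y-\theta)_+\mathbf{1}_{(x>\theta)} - (1-\tau)(\theta-y)_+\mathbf{1}_{(x>\theta)} + \bigl(\text{something}\bigr),
\]
but more cleanly I would handle the two regions $\{y\le\theta\}$ and $\{y>\theta\}$ separately, as in Lemma \ref{lem:A.3}, so that on $\{y\le\theta\}$ we get $-(1-\tau)(\theta-y)\mathbf{1}_{(x>\theta)}$ and on $\{y>\theta\}$ we get $\tau(y-\theta)\mathbf{1}_{(x>\theta)}$. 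Taking expectations then yields the two expectation terms $\tau\E{(Y-X_j)_+\mathbf{1}_{(X_j>\theta)}}$ and $-(1-\tau)\E{(X_j-Y)_+\mathbf{1}_{(X_j>\theta)}}$ appearing in $\psi_j$ — wait, these are not the same thing; I need to re-express $(\theta-y)\mathbf{1}_{(x>\theta)}$ on $\{y\le\theta\}$ in terms of $(x-y)_+$ and an integral tail. This is where Lemma \ref{lem:A.1} enters.

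Concretely, I would apply Lemma \ref{lem:A.1} with the Borel set $A=(\theta,\infty)$. That lemma rewrites $\E{(X-Y)_+\mathbf{1}_{(X>\theta)}}$ and $\E{(Y-X)_+\mathbf{1}_{(X>\theta)}}$ as integrals of joint-probability tails over $w\in\mathbb{R}$. Combining these two representations and using the set identity $\{X>\theta\}=\{X>w\}\cup\text{(part with }w\le\theta)$ — more precisely, observing that for the relevant $w$ the indicator $\mathbf{1}_{(X>w)}\mathbf{1}_{(X>\theta)}$ simplifies — I can match the integrand $\tau\prob(X_j>w,Y>w)+(1-\tau)\prob(X_j>w,Y\le w)$ restricted to $w\ge\theta$, plus the boundary contributions over $w<\theta$ that reassemble into the two expectation terms $\tau\E{(Y-X_j)_+\mathbf{1}_{(X_j>\theta)}}$ and $-(1-\tau)\E{(X_j-Y)_+\mathbf{1}_{(X_j>\theta)}}$. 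The bookkeeping of which integration region produces which of the three summands of $\psi_j$ is the main obstacle: one must be careful to split the $w$-integral at $w=\theta$, to keep track of the weights $\tau$ versus $1-\tau$ according to whether $Y$ is above or below $w$ (equivalently, above or below $\theta$ for the relevant range), and to verify that the pieces with $w<\theta$ precisely telescope into $\E{(Y-X_j)_+\mathbf{1}_{(X_j>\theta)}}$ and $\E{(X_j-Y)_+\mathbf{1}_{(X_j>\theta)}}$ via Lemma \ref{lem:A.1} run ``backwards''. (For $\tau=1/2$ everything collapses to the simpler $\psi_j$ of Theorem \ref{thm:mean}, since then $\tau=1-\tau$ and $\prob(X_j>w,Y>w)+\prob(X_j>w,Y\le w)=\prob(X_j>w)$, and the two expectation terms combine into $\E{(\EC{Y}{X_j}-X_j)\mathbf{1}_{(X_j>\theta)}}$; this is a useful consistency check.) Once $\E{-S_\theta(X_j,Y)}=\psi_j(\theta)$ is verified for each $j$, I invoke \citet{Ehm2016} as above to conclude.
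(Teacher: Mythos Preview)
Your strategy is correct and tracks the paper's proof closely: reduce to the elementary scores via \citet{Ehm2016}, compute $\E{-S_\theta(X_j,Y)}$, then rewrite using Lemma \ref{lem:A.1} with $A=(\theta,\infty)$. The one genuine difference is the middle step. The paper computes the expected elementary score by applying Lemma \ref{lem:A.2} (the Hoeffding-type identity for $\E{XZ}$) to the pair $\bigl(\mathbf{1}_{(X_j>\theta)},\,|\mathbf{1}_{(Y<\theta)}-\tau|(\theta-Y)\bigr)$, working out the joint and marginal CDFs explicitly. Your case split on $\{y\le\theta\}$ versus $\{y>\theta\}$ is more direct: it gives $-S_\theta(X,Y)=\tau(Y-\theta)_+\mathbf{1}_{(X>\theta)}-(1-\tau)(\theta-Y)_+\mathbf{1}_{(X>\theta)}$, and a one-line Fubini (writing $(Y-\theta)_+=\int_\theta^\infty\mathbf{1}_{(Y>w)}\diff w$, etc.) lands on exactly the same intermediate form the paper reaches after the Lemma \ref{lem:A.2} computation, namely
\[
\psi_j(\theta)=\tau\int_\theta^\infty \prob(X_j>\theta,\,Y\ge w)\diff w-(1-\tau)\int_{-\infty}^\theta \prob(X_j>\theta,\,Y<w)\diff w.
\]
So your ``wait'' moment is precisely the right observation: what you first obtain involves $(Y-\theta)_+$ and $(\theta-Y)_+$, not $(Y-X_j)_+$ and $(X_j-Y)_+$; the bridge is to split $\prob(X_j>\theta,\cdot)$ into $\prob(X_j>w,\cdot)$ plus a remainder, and Lemma \ref{lem:A.1} identifies the remainders with the two expectation terms. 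One small correction to your bookkeeping sketch: it is not only the $w<\theta$ pieces that telescope. The $(Y-X_j)_+$ term comes entirely from $w>\theta$ (since $X_j>\theta$ and $X_j\le w$ force $w>\theta$), whereas the $(X_j-Y)_+$ term collects contributions from both sides of $\theta$.
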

\begin{proof}
By \citet[Corollary 1b]{Ehm2016}, A dominates B if and only if $\E{S_\theta(X_B, Y)} \ge \E{S_\theta(X_A, Y)}$ for all $\theta \in \mathbb{R}$,
where $S_\theta$ is given at \eqref{eq:lemA.3}, see Lemma \ref{lem:A.3}. Note that $S_\theta(X_j,Y)$ is integrable if $Y$ is integrable, $j \in \{A,B\}$.
We apply Lemma \ref{lem:A.2} to the random variables $\mathbf{1}_{(X_j > \theta)}$ and $|\mathbf{1}_{(Y < \theta)} - \tau|(\theta - Y)$. We have $F(x) = \prob(\mathbf{1}_{(X_j > \theta)} \le x)$ $= \mathbf{1}_{(x \ge 1)} + \mathbf{1}_{(x \in [0,1))}\prob(X_j \le \theta)$,
\begin{align*}
G(z) &= \prob(|\mathbf{1}_{(Y < \theta)} - \tau|(\theta - Y) \le z)\\ 
&= \prob((1-\tau)(\theta-Y) \le z, Y < \theta) + \prob(\tau(\theta-Y) \le z, Y \ge \theta)\\
&= \mathbf{1}_{(z > 0)}\prob(Y \ge \theta - z/(1-\tau)) +  \mathbf{1}_{(z \le 0)}\prob(Y \ge \theta - z/\tau),\\
H(x,z) &= \prob(\mathbf{1}_{(X_j > \theta)} \le x,|\mathbf{1}_{(Y < \theta)} - \tau|(\theta - Y) \le z)\\&= \mathbf{1}_{(x \ge 1)}G(z) + \mathbf{1}_{(x \in [0,1), z > 0)}\prob(X_j \le \theta, Y \ge \theta - z/(1-\tau))\\ &\qquad+  \mathbf{1}_{(x \in [0,1), z \le 0)}\prob(X_j \le \theta, Y \ge \theta - z/\tau).
\end{align*}
Therefore, the first integral on the right hand side of \eqref{eq:lemA.2} is
\begin{align*}
&\int_0^{\infty}\int_0^\infty H(x,z) - F(x) - G(z) + 1 \diff x \diff z \\&= \int_0^\infty \int_0^1 \prob(X_j \le \theta, Y \ge \theta - z/(1-\tau)) - \prob(X_j \le \theta)) - \prob(Y \ge \theta - z/(1-\tau)) + 1 \diff x \diff z\\
&= \int_0^\infty \prob(X_j > \theta, Y < \theta - z/(1-\tau))\diff z.
\end{align*}
Similarly, we can compute the third integal on the right hand side of \eqref{eq:lemA.2} to obtain
\begin{align*}
\int_{-\infty}^0\int_0^\infty H(x,z) - G(z) \diff x \diff z = -\int_{-\infty}^0 \prob(X_j > \theta, Y \ge \theta - z/\tau)\diff z.
\end{align*}
The second and the fourth integral on the right hand side of \eqref{eq:lemA.2} are zero because $H(x,z)$ and $F(x)$ are zero for $x < 0$. 
Using a change of variables, we obtain
\begin{align*}
\psi_j(\theta)& = -\E{S_\theta(X_j,Y)} \\&=  \tau\int_{\theta}^\infty \prob(X_j > \theta, Y \ge w)\diff w - (1-\tau) \int_{-\infty}^\theta \prob(X_j > \theta, Y < w)\diff w.
\end{align*}
We can rewrite this as
\begin{align*}
\psi_j(\theta)& = \int_{\theta}^\infty \tau \prob(X_j > w, Y \ge w) + (1-\tau) \prob(X_j > w ,Y < w)\diff w\\
&\quad+ \tau\int_{\theta}^\infty\prob(w\ge X_j > \theta ,Y \ge w)\diff w \\&\quad- (1-\tau)\left(\int_{\theta}^{\infty} \prob(X_j > w,Y <  w)\diff w + \int_{-\infty}^\theta \prob(X_j > \theta, Y < w)\diff w\right)\\
&=  \int_{\theta}^\infty \tau \prob(X_j > w, Y \ge w) + (1-\tau) \prob(X_j > w ,Y < w)\diff w\\
&\quad+ \tau\int_{-\infty}^\infty\prob(X_j \le w ,Y \ge w,X_j > \theta)\diff w\\
&\quad- (1-\tau)\int_{-\infty}^{\infty} \prob(X_j > w,Y <  w, X_j > \theta)\diff w\\
&=  \int_{\theta}^\infty \tau \prob(X_j > w, Y \ge w) + (1-\tau) \prob(X_j > w ,Y < w)\diff w\\
&\quad + \tau \E{(Y-X_j)_+\mathbf{1}_{(X_j > \theta)}} - (1-\tau) \E{(X_j-Y)_+ \mathbf{1}_{(X_j > \theta)}},
\end{align*}
where the second equality holds because $\prob(w\ge X_j > \theta ,Y \ge w) = 0$ for $w < \theta$ and $\prob(X_j > w,Y <  w, X_j > \theta) = \prob(X_j > w,Y <  w)$ for $w \ge \theta$, $\prob(X_j > w,Y <  w, X_j > \theta) = \prob(Y <  w, X_j > \theta)$ for $w < \theta$. The last equality follows from Lemma \ref{lem:A.1} with $A = (\theta,\infty)$.
\end{proof}

\section{Proofs and Technical Details}
\label{app:b}

\begin{proof}[Proof of Theorem \ref{thm:mean}]
The result follows from Theorem \ref{thm:expectiles} with $\tau=1/2$ because $\prob(X_j > w, Y \ge w) +  \prob(X_j > w ,Y < w) = \prob(X_j > w)$ and $\E{((Y-X_j)_+ - (X_j-Y)_+)\mathbf{1}_{(X_j > \theta)}}$ $=\E{(Y-X_j)\mathbf{1}_{(X_j > \theta)}}$ $=\mathbb{E}((\EC{Y}{X_j}-X_j)\mathbf{1}_{(X_j > \theta)})$, where the second equality uses the law of iterated expectations.
\end{proof}

\begin{proof}[Proof of Theorem \ref{thm:1}]

Under auto-calibration, $\EC{Y}{X_j} = X_j$ holds almost surely. In view of Theorem \ref{thm:mean}, Theorem \ref{thm:1} then follows from \citet[Corollary 4.1]{MullerRuschendo2001} which shows that {$X_A$} is greater than {$X_B$} in convex order if and only if $\int_a^{\infty} \prob(X_A > t) \diff t \ge \int_a^{\infty} \prob(X_B > t) \diff t$ for all $a \in \mathbb{R}$, and 
\begin{align}
&\lim_{a \to -\infty} \left(\int_a^{\infty} \prob(X_A > t) \diff t -  \int_a^{\infty} \prob(X_B > t) \diff t\right) = 0.\label{Cor4.1}
\end{align}
{To see why \eqref{Cor4.1} holds, note that
\begin{align*}
\lim_{a \to -\infty} \left(\int_a^{\infty} \prob(X_A > t) \diff t -  \int_a^{\infty} \prob(X_B > t) \diff t\right) &= \E{X_A}  - \E{X_B} = \E{Y}-\E{Y} = 0,
\end{align*}
where the first equality follows from \citet[Proposition 4.1.(a)(iii)]{MullerRuschendo2001}, and the second equality follows from auto-calibration.}
\end{proof}

\begin{proof}[Proof of Proposition \ref{prop:he}]

Auto-calibration of $X_j$ holds because $\sigma(X_j) \subseteq \mathcal{F}_j$ and $\EC{Y}{X_j} = \EC{\EC{Y}{\mathcal{F}_j}}{X_j} = \EC{X_j}{X_j} = X_j,$ where the first equality uses the tower property of conditional expectation. To show that $X_A$ is greater than $X_B$ in convex order, note that $\EC{X_A}{X_B} = \EC{\EC{Y}{\mathcal{F}_A}}{X_B} = \EC{Y}{X_B} = X_B,$ where the second equality again uses the tower property, together with the fact that $\sigma(X_B) \subset \mathcal{F}_A$. \citeauthor{Strassen1965}'s \citeyear{Strassen1965} characterization mentioned in Section \ref{sec:mean} thus implies that $X_A$ is greater than $X_B$ in convex order.\end{proof}

\begin{proof}[Proof of Corollary at the end of Section \ref{sec:auto}]

Due to auto-calibration, $\text{Cov}(X_j, Y) = \V{X_j}$ for $j \in \{A, B\},$ where $\text{Cov}$ denotes covariance. The convex order condition implies that $ \V{X_A} \ge  \V{X_B}$, and hence that $\text{Cor}(X_A, Y) = \sqrt{R^2_A} \ge \text{Cor}(X_B, Y) = \sqrt{R^2_B},$ where $\text{Cor}$ denotes correlation and $R^2_j$ is the $R^2$ from the Mincer-Zarnowitz regression for forecast $j$.
\end{proof}

\begin{proof}[Proof of Proposition \ref{prop:norm}]

Suppose that $(X_j,Y)$ follow a bivariate normal distribution. We compute $\psi_j(\theta)$ defined in Theorem \ref{thm:mean} for $j \in \{A,B\}$. We have that $\EC{Y}{X_j}$ $= \mu_Y + \rho_{Yj}(\sigma_Y/\sigma_j)(X_j - \mu_j)$,
and hence
\begin{align*}
&\E{(\EC{Y}{X_j} - X_j)\mathbf{1}_{(X_j > \theta)}} = \E{\left(\mu_Y + \rho_{Yj}\frac{\sigma_Y}{\sigma_j}(X_j - \mu_j)-X_j\right)\mathbf{1}_{(X_j > \theta)}}\\
&\quad= \left(\mu_Y -\theta - \rho_{Yj}\frac{\sigma_Y}{\sigma_j}(\mu_j - \theta)\right)\left(1 - \Phi\left(\frac{\theta - \mu_j}{\sigma_j}\right)\right) + \left(\rho_{Yj}\frac{\sigma_Y}{\sigma_j} - 1\right)\sigma_j\Psi\left(\frac{\theta-\mu_j}{\sigma_j}\right),
\end{align*}
where we define for $\theta \in \mathbb{R}$, $\Psi(\theta) = \int_{\theta}^\infty 1 - \Phi(w)\diff w$. Then,
\begin{align}
\psi_j(\theta) &= \frac{\sigma_j}{2}\Psi\left(\frac{\theta-\mu_j}{\sigma_j}\right) + \frac{1}{2}\E{(\EC{Y}{X_j} - X_j)\mathbf{1}_{(X_j > \theta)}} \nonumber\\
&= \frac{1}{2}\left(\mu_Y -\theta - \rho_{Yj}\frac{\sigma_Y}{\sigma_j}(\mu_j - \theta)\right)\left(1 - \Phi\left(\frac{\theta - \mu_j}{\sigma_j}\right)\right) + \frac{\rho_{Yj}\sigma_Y}{2}\Psi\left(\frac{\theta-\mu_j}{\sigma_j}\right). \label{psi1}
\end{align}

Using the assumption that $\mu_A = \mu_B = \mu_Y$ and the fact that $\Psi(\theta) = \varphi(\theta) - \theta~(1-\Phi(\theta)),$ Equation (\ref{psi1}) yields that 
\begin{equation}
2~\psi_j(\theta) = \rho_{Yj}\sigma_Y~\varphi\left(\frac{\theta-\mu_Y}{\sigma_j}\right) - (\theta-\mu_Y)~\left(1-\Phi\left(\frac{\theta-\mu_Y}{\sigma_j}\right)\right).\label{psi2}\qedhere
\end{equation}
\end{proof}

\begin{proof}[Notes on Cases 1 to 4]
Case 1 holds because, for each $\theta \in \mathbb{R}$, we have $2\psi_A(\theta) + (\theta-\mu_Y) \ge \sigma_A~\varphi((\theta - \mu_Y)/\sigma_A) + (\theta - \mu_Y)\Phi((\theta - \mu_Y)/\sigma_A) \ge \sigma_B\varphi((\theta - \mu_Y)/\sigma_B) + (\theta - \mu_Y)\Phi((\theta - \mu_Y)/\sigma_B) \ge 2\psi_B(\theta) + (\theta-\mu_Y),$ where $\psi_j(\theta)$ has been defined at (\ref{psi2}). Case 2a can be shown by re-parametrizing $\sigma_{Yj} = \sigma_Y\sigma_j\rho_{Yj}$, and differentiating $2\psi_j(\theta)$ with respect to $\sigma_j$. Case 3 can be shown by differentiating $2\psi_j(\theta)$ with respect to $\sigma_j$. Cases 2b and 4 are immediate.
\end{proof}

\begin{proof}[Proof of Theorem \ref{thm:sign-condition}]	

Denote the CDF of $\eta_j$, conditional on $\mathcal{F}$, by $F_j^{\mathcal{F}}$, for $j \in \{A, B\}$. By \citet[Theorem 3.D.1]{ShakedShanthiku2007}, the assumptions of Theorem \ref{thm:sign-condition} imply that 
	\begin{equation}\label{eq:sign-change-condition}
	F_A^{\mathcal{F}}(z) - F_B^{\mathcal{F}}(z) \begin{cases}\ge 0, & \text{for $z \ge 0$,}\\ \le 0, & \text{for $z \le 0$.}\end{cases}
	\end{equation}
	By \citet[Corollary 1b]{Ehm2016}, A dominates B if and only if $\E{S_\theta(X_B, Y)} \ge \E{S_\theta(X_A, Y)}$ for all $\theta \in \mathbb{R}$,
	where $S_\theta$ is given at \eqref{esmean}. The random variable $S_\theta(X_j,Y)$ is integrable if $Y$ is integrable. Define $W = \EC{Y}{\mathcal{F}}$, and let $\theta \in \mathbb{R}$. Then,
	\begin{align*}
	2~\E{S_\theta(X_j,Y)} &= \E{\mathbf{1}_{(\theta < X_j)}(\theta - Y)} = \E{\EC{\mathbf{1}_{(\theta < W + \eta_j)}(\theta - W - \varepsilon)}{\mathcal{F}}}\\
	&= \E{\EC{\mathbf{1}_{(\theta - W< \eta_j)}}{\mathcal{F}}\EC{(\theta - W - \varepsilon)}{\mathcal{F}}} = \E{(1-F_j^\mathcal{F}(\theta - W))(\theta - W)}. 
	\end{align*}
	Hence, \eqref{eq:sign-change-condition} implies
	\begin{align*}
	\E{S_\theta(X_B,Y)} &- \E{S_\theta(X_A,Y)} = \frac{1}{2}\E{\left(F_A^{\mathcal{F}}(\theta - W) - F_B^{\mathcal{F}}(\theta - W)\right)(\theta - W)} \ge 0.\qedhere
	\end{align*}

\end{proof}

\begin{proof}[Proof of Proposition \ref{prop:testable}]

	Parts (a) and (b) are immediate given the setup of Theorem \ref{thm:sign-condition}. Regarding (c), we have the following inequality for any strictly increasing function $\phi$:
	\begin{align}
	{\EC{\phi(|\eta_B|)}{\mathcal{F}}}
	&= {\int_{0}^\infty \mathbb{P}(\phi(|\eta_B|) > w|\mathcal{F})~dw }\nonumber\\
	&= {\int_{0}^\infty \left(\mathbb{P}(\eta_B < -\phi^{-1}(w)|\mathcal{F})+ \mathbb{P}(\eta_B > \phi^{-1}(w)|\mathcal{F})\right)~dw}\nonumber\\ 
	&\ge {\int_{0}^\infty \left(\mathbb{P}(\eta_A < -\phi^{-1}(w)|\mathcal{F})+ \mathbb{P}(\eta_A > \phi^{-1}(w)|\mathcal{F})\right)~dw} = \E{\phi(|\eta_A|)|\mathcal{F}},\label{ineq2}
	\end{align}
	where the inequality follows from \eqref{eq:sign-change-condition} in the proof of Theorem \ref{thm:sign-condition}.
	
	Now let $W = \EC{Y}{\mathcal{F}},$ such that $(X_j)^{2k} = (W + \eta_j)^{2k}$. For terms of the form $W^c\eta_j^d$, with $c$ and $d$ being odd integers, it holds that $\E{W^c\eta_j^d} = \E{W^c\EC{\eta_j^d}{\mathcal{F}}} = 0,$ where the last equality follows from symmetry of $F_j^{\mathcal{F}}$ around zero. For terms of the form $W^c\eta_j^d$, with $c$ and $d$ being even integers, it holds that $\E{W^c \eta_B^d} = \E{W^c~\EC{\eta_B^d}{\mathcal{F}}} \ge \E{W^c~\EC{\eta_A^d}{\mathcal{F}}},$ where the inequality follows from \eqref{ineq2}. Part (b) of Theorem \ref{thm:sign-condition} then follows from the binomial theorem.
\end{proof}

\end{document}